\documentclass[a4paper,reqno]{amsart}
\usepackage[all]{xy} 
\usepackage{amssymb} 
\usepackage{hyperref}
\usepackage{eucal}
\usepackage{graphicx}
\usepackage{epsfig}
\usepackage{xcolor}

\setlength{\oddsidemargin}{0in} \setlength{\evensidemargin}{0in}
\setlength{\topmargin}{-1cm} \setlength{\headheight}{0cm} \vsize
27.9truecm \hsize 21.59truecm \hoffset -0.4truecm
\setlength{\textheight}{23cm} \setlength{\textwidth}{15.5cm}
\baselineskip 8 mm \baselineskip 8 mm

\numberwithin{equation}{section}



\newtheorem{definition}{Definition}[section]

\newtheorem{theorem}[definition]{Theorem}
\newtheorem{proposition}[definition]{Proposition}
\newtheorem{corollary}[definition]{Corollary}
\newtheorem{remarkth}[definition]{Remark}
\newtheorem{example}[definition]{Example}
\newenvironment{remark}{\begin{remarkth}\upshape}{\hfill$\diamond$\end{remarkth}}

\renewcommand{\emph}[1]{{\bfseries\itshape{#1}}}

\newcommand{\R}{\mathbb{R}} 
\newcommand{\N}{\mathbb{N}} 
\newcommand{\C}{\mathcal{C}}


\newcommand{\lcf}{\lbrack\! \lbrack}
\newcommand{\rcf}{\rbrack\! \rbrack}



\makeatletter
\newcommand\prol{\@ifstar{\@proldf}{\@prolpf}} 
\def\@prolpf{\@ifnextchar[{\@prolpf@wrt}{\@prolpf@}}
\def\@prolpf@wrt[#1]#2{\@ifnextchar[{\@prolpf@wrt@at{#1}{#2}}{\@prolpf@wrt@{#1}{#2}}}

\def\@prolpf@wrt@at#1#2[#3]{\prolsymbol^{#1}_{#3}#2}
\def\@prolpf@wrt@#1#2{\prolsymbol^{#1}#2}
\def\@prolpf@#1{\@ifnextchar[{\@prolpf@at{#1}}{\@prolpf@@{#1}}}
\def\@prolpf@at#1[#2]{\prolsymbol_{#2}#1}
\def\@prolpf@@#1{\prolsymbol#1}
\def\@proldf{\@ifnextchar[{\@proldf@wrt}{\@proldf@}}
\def\@proldf@wrt[#1]#2{\@ifnextchar[{\@proldf@wrt@at{#1}{#2}}{\@proldf@wrt@{#1}{#2}}}

\def\@proldf@wrt@at#1#2[#3]{\prolsymbol^{*#1}_{#3}#2}
\def\@proldf@wrt@#1#2{\prolsymbol^{*#1}#2}
\def\@proldf@#1{\@ifnextchar[{\@proldf@at{#1}}{\@proldf@@{#1}}}
\def\@proldf@at#1[#2]{\prolsymbol^*_{#2}#1}
\def\@proldf@@#1{\prolsymbol^*#1}
\def\prolsymbol{\mathcal{L}}
\makeatother



\newcommand{\Gc}{\mathcal{G}}



\def\lcf{\lbrack\! \lbrack}
\def\rcf{\rbrack\! \rbrack}
\setlength{\parskip}{3pt}











%

%

%



\begin{document}
\title[A unified framework for mechanics]
{A unified framework for mechanics.\\ Hamilton-Jacobi equation and applications}

\author[P.\ Balseiro]{P.\ Balseiro}
\address{P.\ Balseiro:
Instituto de Matematica Pura e Aplicada (IMPA), Estrada Dona Castorina 110, 22460-320 Rio de Janeiro, Brazil } \email{poi@impa.br}

\author[J.\ C.\ Marrero]{J.\ C.\ Marrero}
\address{J.\ C.\ Marrero:
Unidad Asociada ULL-CSIC Geometr{\'\i}a Diferencial y Mec\'anica
Geom\'etrica, Departamento de Matem\'atica Fundamental, Facultad de
Matem\'aticas, Universidad de la Laguna, La Laguna, Tenerife, Canary
Islands, Spain} \email{jcmarrer@ull.es}

\author[D.\ Mart\'{\i}n de Diego]{D. Mart\'{\i}n de Diego}
\address{D.\ Mart\'{\i}n de Diego:
Instituto de Ciencias Matem\'aticas (CSIC-UAM-UC3M-UCM), Serrano
123, 28006 Madrid, Spain} \email{d.martin@imaff.cfmac.csic.es}

\author[E.\ Padr\'on]{E.\ Padr\'on}
\address{E.\ Padr\'on: Unidad Asociada ULL-CSIC Geometr{\'i}a Diferencial
y Mec\'anica Geom\'etrica, Departamento de Matem\'atica Fundamental,
Facultad de Matem\'aticas, Universidad de la Laguna, La Laguna,
Tenerife, Canary Islands, Spain} \email{mepadron@ull.es}

\keywords{skew-symmetric algebroids, 1-cocycle, linear Poisson
structures, AV-bundles, Hamiltonian dynamics, Hamilton-Jacobi
equation, affine nonholonomic systems, linear external forces.}

\subjclass[2000]{70H20,70G45,70F25,37J60. }

\thanks{This work has been partially supported by MEC(Spain) Grants MTM2006-03322, MTM2007-62478, MTM2009-13383, ACIISI SOLSUBC200801000238, project
ÒIngenioMathematicaÓ(i-MATH) No. CSD2006-00032
(Consolider-Ingenio2010) and S-0505/ESP/0158 of  the CAM. \ \
P.B. thanks CNPq(Brazil) for financial support and the Centre
Interfacultaire Bernoulli (Switzerland) for its hospitality during
the Program {\it Advances in the Theory of Control Signals and
Systems with Physical Modeling}, where part of this work was done.
The authors wish to thank Jos\'e M\'endez for helpful comments about
the examples contained in this paper.  }


\begin{abstract}
In this paper, we construct Hamilton-Jacobi equations for a great
variety of mechanical systems (nonholonomic systems subjected to linear or
affine constraints, dissipative systems subjected to external forces, time-dependent mechanical
systems...). We recover all these, in principle, different cases
using a unified framework based on skew-symmetric algebroids with a
distinguished $1$-cocycle. Several examples illustrate the theory.

\end{abstract}

\maketitle
\tableofcontents

\section{Introduction}

A fundamental requirement for new developments in mechanics is to
unravel the geometry that underlies different dynamical systems,
especially mechanical systems.  There are several reasons why this
geometrical understanding is fundamental. First, it is a key tool
for reduction by symmetries and for the geometric characterization
of the integrability and stability theories. Second, the effective
use of numerical techniques is often based on the comprehension of
the fundamental structures appearing in the dynamics of mechanical
and control systems. In fact, the geometrical analysis of such
systems reveals what they have in common and indicates the most
suitable strategy to analyze their solutions. Finally, the
geometrical approach has provided substantial contributions to
neighboring areas, such as molecular systems, classical field
theories, control theory, engineering, etc.

Recent efforts have led to a unified framework for geometric
mechanics based on a new structure, namely a \textit{Lie algebroid}
(see Section \ref{section1}), which represents the phase space for
lagrangian mechanics and whose dual is the phase space for
hamiltonian mechanics. These ideas were introduced in a pioneering
paper by A. Weinstein \cite{Weinstein99} (see also \cite{TD69})
where the equations of motion were derived from a Lagrangian
function given on a Lie algebroid. This was done using the linear
Poisson structure on the dual of the Lie algebroid and the Legendre
transformation associated with that regular Lagrangian. The unifying
feature of the Lie-algebroid formalism is particularly relevant for
the class of Lagrangian systems invariant under the action of a Lie
group of symmetries (see \cite{LMM} for a survey on the subject; see
also \cite{CoLeMaMaMa,Ma}).

As it turns out, the Lie-algebroid scheme is not general enough to
include some interesting mechanical systems. On a Lie algebroid, the
Jacobi identity for the bracket of sections implies the preservation
of the associated linear Poisson bracket on its dual. However, many
interesting examples are not covered by this strong assumption, for
instance nonholonomic mechanics (see
\cite{TD12,IbLeMaMa,KooMa,TD108} and references therein). Moreover,
it would be interesting to find a general setting encompassing also
some cases of dissipation of energy (for instance, explicit
time-dependent systems, systems subjected to external forces or
mechanical systems subjected to affine nonholonomic constraints).
These reasons are our main motivation for introducing hamiltonian
mechanics on more general objects, namely skew-symmetric algebroids
equipped with a 1-cocycle; skew-symmetric algebroids will allow us
to avoid the preservation of the Poisson bracket
\cite{BLMM,GrGr,GGU0,GrLeMaMa,LeMaMa,Po2} and the 1-cocycle will
introduce a dissipative character to the dynamics (for the geometric
description of time-dependent mechanics, in terms of Lie affgebroids
or, equivalently, in terms of Lie algebroids with a 1-cocycle, see
\cite{GGU,IMMS, TD49, MaMaSo,MaSo,MaMeSa}; see also \cite{DianaT}).
Other approaches to the study of nonholonomic mechanical systems
subjected to linear constraints, in the algebroid setting, have been
also discussed in some recent papers (see
\cite{CoLeMaMar,CoMa,Me,MeLa}). In these papers, the key tool is the
notion of the prolongation of a Lie algebroid over a fibration.

Our main goal is to derive a Hamilton-Jacobi equation for the case
of skew-symmetric algebroids with a 1-cocycle. As it is well known,
Hamilton-Jacobi theory for unconstrained systems is a  useful tool
for the exact integration  of Hamilton's equations, for instance
using the technique of separation of variables (see \cite{AbMa} and
references therein). In other cases,  this theory allows us to
simplify the integration of Hamilton's equations or, at least, to
find some particular solutions of the system. To summarize the idea
for classical Hamilton's equations, consider a configuration
manifold $Q$ and a hamiltonian $H: T^*Q\to \R$. Then the
Hamilton-Jacobi equation can be written as
\[
H(q, \frac{\partial W}{\partial q})=\hbox{constant}
\]
for some function $W: Q\to \R$. If we find such a function $W$, then
the integration of  Hamilton's equations (for initial condition
along $dW(Q)$) is reduced to knowing the integral curves of a vector
field on $Q$, defined as $X_H^{dW}=T\tau_{T^*Q}\circ X_H\circ dW\in
{\mathfrak X}(Q)$, where $\tau_{T^*Q}: T^*Q\to Q$ is the canonical
projection and $X_H$ is the hamiltonian vector field associated to
$H$. Hence, from the integration of  a vector field on the
configuration space it is possible to recover some of the solutions
of the original hamiltonian system.

A similar idea is  also present in riemannian geometry when we look
for a vector field $X\in {\mathfrak X}(Q)$ verifying
$\nabla^{\mathcal G}_X X=0$ (a geodesic or  auto-parallel vector
field), where $\nabla^{\mathcal G}$ is the Levi-Civita connection
associated to a riemannian metric ${\mathcal G}$ on $Q$. Their
integral curves are geodesics, that is, solutions of the geodesic
second-order differential equations corresponding to ${\mathcal G}$
with initial conditions on $\hbox{Im\;} X\subset TQ$. Observe that,
in general, $X$ is not the gradient with respect to $\mathcal{G}$ of
a function $W\in C^{\infty}(Q)$, which would be the case if we
applied the classical Hamilton-Jacobi theorem. Hence, to recover
this situation it is necessary to generalize the classical
Hamilton-Jacobi equations.

On the other hand, recently, some of the authors of this paper
proposed a generalization of the Hamilton-Jacobi equation for
skew-symmetric algebroids (see \cite{LeMaMa}). Roughly speaking, a
skew-symmetric algebroid is a vector bundle $\tau_E : E
\longrightarrow Q$ equipped with a skew-symmetric bilinear bracket
of sections and a vector bundle morphism, $\rho_E:E\to TQ$ (the
anchor map),
 satisfying a
Leibniz-type property, i.e., a Lie algebroid structure without the
integrability property, (for more details, see Section
\ref{section1}). The existence of such a structure on $E$ is
equivalent to the existence  of a linear almost Poisson bracket on
the dual bundle $\tau_{E^*}: E^*\to Q$, or the existence of an
almost differential $d^E$ on $\tau_E : D \longrightarrow Q$ which
satisfies all the properties of an standard differential except that
$(d^E)^2$ is not, in general, zero.

Skew-symmetric algebroids were used in \cite{LeMaMa} to describe the
Hamilton-Jacobi equation of nonholonomic mechanical systems. In this
case  $E$ is determined by the linear constraints and the function
$W\in C^\infty(Q)$ is replaced by a $1$-cocycle on the dual bundle
$E^*$ (i.e., a section $\alpha$ of $E^*$ such that $d^E\alpha=0$).
With these ideas, one derives a Hamilton-Jacobi equation for
nonholonomic dynamics, illustrating the utility of this new theory
for the integration of different nonholonomic problems.
Hamilton-Jacobi theory for standard nonholonomic mechanical systems
has been also discussed in recent papers  (see \cite{CaGrMaMaMuRo,
ILM,BT}).

In this paper, we develop a Hamilton-Jacobi theory including, as
particular cases, the Hamilton-Jacobi equation for skew-symmetric
algebroids introduced in \cite{LeMaMa} and the case of auto-parallel
vector fields in riemannian geometry (Example \ref{Classical}), as
well as a great variety of new examples (time-dependent hamiltonian
systems, systems with external forces, nonholonomic mechanics with
affine constraints...). With this objective in mind, we obtain the
main result of our paper, Theorem \ref{main}, the Hamilton-Jacobi
equation for a hamiltonian system on a skew-symmetric algebroid with
a $1$-cocycle. Moreover,  our construction is preserved under the
natural morphisms of the theory. This fact is proved in Theorem
\ref{theo:morphism}. We remark that this new version of the
Hamilton-Jacobi equation is much more general than the one developed
in \cite{LeMaMa}, since here, we do not require the 1-section
solutions of the Hamilton-Jacobi equation to be closed. This fact is
extensively used in Example \ref{ex:ball}, where we find solutions
for the problem of a rolling ball in a rotating plane with
time-dependent angular velocity looking  for functions $W\in
C^\infty(Q)$ which do not satisfy $(d^E)^2 W=0$ (and thus, out of
the cases studied in \cite{LeMaMa}). Moreover, the proof of the
Theorem \ref{main} is simpler and completely independent of the one
done in \cite{LeMaMa}.

The paper is structured as follows. In Section \ref{section1}, we
discuss some aspects of the geometry of skew-symmetric algebroids in
the presence of a 1-cocycle. Moreover, given a hamiltonian section
$h$ of the AV-bundle associated with the skew-symmetric algebroid
and the 1-cocycle, we obtain Hamilton equations for $h$. In Section
\ref{section2}, we formulate and prove the Hamilton-Jacobi Theorem
for a hamiltonian system on a skew-symmetric algebroid with a
1-cocycle. In addition, we see that the Hamilton-Jacobi equation is
preserved under the natural morphisms of the theory. Finally, in the
last section, some theoretical and practical examples will
illustrate the power of these new techniques as for instance:
Hamilton-Jacobi equation for a particle on a vertical cylinder in a
uniform gravitational field with friction, for a homogeneous rolling
ball without sliding on a rotating table with time-dependent angular
velocity or for the vertical rolling disk with external forces.

\section{Skew-symmetric algebroids, 1-cocycles and Hamiltonian dynamics}\label{section1}

\subsection{Skew-symmetric algebroids and 1-cocycles}

Let $\tau_E:E\to Q$ be a vector bundle of rank $n$ over the manifold
$Q$. We denote by $\Gamma(E)$ the $C^\infty(Q)$-module of sections
of $E$. {\it A skew-symmetric algebroid structure } on $E$ is a pair
$(\lcf\cdot,\cdot\rcf,\rho)$, where
$\lcf\cdot,\cdot\rcf:\Gamma(E)\times \Gamma(E)\to \Gamma(E)$ is a $\R$-bilinear
skew-symmetric bracket on $\Gamma(E)$ and $\rho:E\to TQ$ is a vector
bundle map ({\it the anchor map}) such that
\[
\lcf \sigma,f\gamma\rcf=f\lcf \sigma,\gamma\rcf +
\rho(\sigma)(f)\gamma,\;\;\; \sigma, \gamma\in \Gamma(E), \;\;\;
f\in C^\infty(Q).
\]
Note that $\rho:E\to TQ$ induces a homomorphism of
$C^\infty(Q)$-modules which we denote also by $\rho:\Gamma(E)\to
{\frak X}(Q)$ (see \cite{BLMM,GGU0,GU1,GU2,LeMaMa,Po2}).

If the bracket $\lcf\cdot,\cdot\rcf$ satisfies the Jacobi identity
then $(E,\lcf\cdot,\cdot\rcf,\rho)$ is a {\it Lie algebroid} (see,
for instance, \cite{Mac}). In such a case, we have that the anchor
map is a morphism of Lie algebras, i.e.
\[
\rho(\lcf \sigma, \gamma\rcf)=[\rho(\sigma),\rho(\gamma)], \mbox{
for } \sigma,\gamma\in \Gamma(E).
\]

On a skew-symmetric algebroid  structure $(\lcf\cdot,\cdot\rcf,\rho)$ on the
vector bundle $\tau_E:E\to Q$ it is induced {\it the almost
differential } $d^E:\Gamma(\wedge^\bullet E^*)\to
\Gamma(\wedge^{\bullet+1}E^*)$ as a $\R$-linear map given by
\begin{equation}\label{1.0}
\begin{array}{rcl}
(d^Ef)(\sigma)&=&\rho(\sigma)(f),\\[8pt]
(d^E\alpha)(\sigma,\gamma)&=&\rho(\sigma)(\alpha(\gamma))-\rho(\gamma)(\alpha(\sigma))-\alpha(\lcf
\sigma,\gamma\rcf)
\end{array}
\end{equation}
and
\[
d^E(\beta_1\wedge\beta_2)=d^E\beta_1\wedge \beta_2 +
(-1)^k\beta_1\wedge d^E\beta_2,
\]
for $f\in C^\infty(Q)$, $\alpha\in \Gamma(E^*)$, $\sigma,\gamma\in
\Gamma(E)$, $\beta_1\in \Gamma(\wedge^kE^*)$  and $\beta_2\in
\Gamma(\wedge^\bullet E^*)$.

Note that $d^E$ is defined in a similar way that the standard differential over a manifold. However, there are important differences between them. In what follows, we will discuss some facts related with these differences.

Firstly, unlike the case of the stardard differential on a manifold,
we have that the almost differential $d^E$ of a skew-symmetric
algebroid $(E,\lcf\cdot,\cdot\rcf,\rho)$ is not, in general, a
cohomology operator, i.e., $(d^E)^2\not=0$. In fact, $(d^E)^2=0$ if
and only if $(\lcf\cdot,\cdot\rcf,\rho)$ is a Lie algebroid
structure.

For the particular case of a function $g\in C^\infty(Q)$, we deduce
from (\ref{1.0}) that $(d^E)^2g=0$ if and only if $X(g)=0$ for all
$X\in \widetilde{D}$, where $\widetilde{D}$ is the finitely
generated distribution given by
$$\widetilde{D}:= span \{ [\rho(\sigma),\rho(\gamma)] - \rho (\lcf
\sigma, \gamma \rcf)  \ : \ \sigma, \gamma \in \Gamma(E)\} \subseteq
\mathfrak{X}(Q).$$

On the other hand, if $Q$ is a connected manifold, in general, $d^Eg=0$ does not imply that $g:Q\to {\mathbb R}$ is constant. In other words, if $Q$ is a connected manifold, in general,  the vector space
$$H^0(d^E)=\{f\in C^\infty(Q)\ \mbox{such that} \ d^Ef=0\},$$  is not  isomorphic to ${\mathbb R}.$  Note that, when $E$ is a Lie algebroid, $H^0(d^E)$ is the Lie algebroid cohomology $0$-group of $E$. Even in this case, one can not guarantee that  $H^0(d^E)$ is isomorphic
${\mathbb R}$. However, if $Q$ is connected and $E$ is transitive, i.e., $\rho(E)=TQ$, then $H^0(d^E)\cong {\mathbb R}$.

In \cite{LeMaMa} the authors discuss the relation between a function
$g\in C^\infty(Q)$  being constant and $d^Eg=0$. In order to
remember these results  we introduce  the notion of completely
nonholonomic distribution (see \cite{Mont}).

\begin{definition} \ A distribution \  ${\mathcal D} \subset TQ$ \ is called {\it
completely nonholonomic} (or {\it bracket generating}) if  $\{
 X_k, [X_k,X_l],[X_i,[X_k,X_l]],... \in
\mathfrak X(Q) \ : \ X_j(q) \in {\mathcal D}_q \ \forall q \in Q\}$
spans the tangent bundle $TQ$.\end{definition}

The Lie brackets of vectors fields in ${\mathcal D}$ generate a flag
${\mathcal D} \subset {\mathcal D}^2 \subset ... \subset TQ$ with
$${\mathcal D}^2= {\mathcal D} + [{\mathcal D},{\mathcal D}], \qquad
{\mathcal D}^{r+1} = {\mathcal D}^r + [{\mathcal D}, {\mathcal D}^r]
$$where
$$[{\mathcal D},{\mathcal D}^k] = span \{[X,Y] \ : \ X \in {\mathcal
D} \mbox{ and } Y \in {\mathcal D}^k\}$$ with the spans taken over
smooth functions on $Q$ (for details, see \cite{Mont}).

Here, we have two extreme cases: on one hand, the distribution
${\mathcal D}$ can be involutive, then we have ${\mathcal
D}={\mathcal D}^2={\mathcal D}^r, \ \forall r \in \N_{>0}.$ On the
other hand, if ${\mathcal D}$ is completely nonholonomic, then there
exists $r \in \N_{>0}$ such that ${\mathcal D}^r=TQ$.

Now, consider a skew-symmetric algebroid $(E,\lcf\cdot,\cdot\rcf,\rho)$ on a manifold $Q$ and the following finitely generated distribution ${\mathcal D}$ given by
$${\mathcal D}_q=\rho_q({E_q}),\;\;\; \mbox{ for all }q\in Q. $$

\begin{proposition} \cite{LeMaMa}
If $(E, \lcf\cdot , \cdot \rcf, \rho)$ is a skew-symmetric algebroid
over a connected manifold $Q$, such that $ {\mathcal D} = \rho(E)
\subset TQ$ is a completely nonholonomic distribution, then
$H^0(d^E)\cong \R$.
\end{proposition}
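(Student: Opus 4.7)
The plan is to unwind the definition and then exploit the bracket-generating hypothesis via a straightforward induction.

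First, I would note that the inclusion $\R \hookrightarrow H^0(d^E)$ is automatic: for any constant $c\in\R$ and any $\sigma\in\Gamma(E)$, $(d^E c)(\sigma)=\rho(\sigma)(c)=0$. So the content of the proposition is the reverse inclusion, i.e.\ that $d^E g=0$ forces $g$ to be constant.

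Next, I would translate the hypothesis. By the first formula in \eqref{1.0}, $d^E g=0$ means $\rho(\sigma)(g)=0$ for every $\sigma\in\Gamma(E)$, i.e.\ $X(g)=0$ for every $X$ in the image $\rho(\Gamma(E))$. Since the set
\[
\mathcal{N}=\{X\in\mathfrak{X}(Q) : X(g)=0\}
\]
is obviously a $C^\infty(Q)$-submodule of $\mathfrak{X}(Q)$ (if $X(g)=0$ then $(fX)(g)=fX(g)=0$), and since $\rho(\Gamma(E))$ generates, over $C^\infty(Q)$, the module of sections of $\mathcal{D}=\rho(E)$ (at least pointwise this is clear, and the module structure ensures $\mathcal{N}$ contains all vector fields tangent to $\mathcal{D}$), we conclude that every vector field $Y$ with $Y_q\in\mathcal{D}_q$ for all $q$ annihilates $g$.

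The key step is an induction on the flag $\mathcal{D}\subset\mathcal{D}^2\subset\cdots$. Assume by induction that every $Y\in\mathcal{D}^k$ satisfies $Y(g)=0$. If $X\in\mathcal{D}$ and $Y\in\mathcal{D}^k$, then
\[
[X,Y](g)=X(Y(g))-Y(X(g))=X(0)-Y(0)=0,
\]
so $[X,Y]\in\mathcal{N}$. Because $\mathcal{N}$ is a $C^\infty(Q)$-submodule, it contains the $C^\infty(Q)$-span of all such brackets, hence the whole $\mathcal{D}^{k+1}=\mathcal{D}^k+[\mathcal{D},\mathcal{D}^k]$. By the bracket-generating hypothesis there is an $r$ with $\mathcal{D}^r=TQ$, and the induction then gives $\mathcal{N}=\mathfrak{X}(Q)$. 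Equivalently, $dg=0$ on $Q$, and since $Q$ is connected, $g$ is constant. This shows $H^0(d^E)\cong\R$.

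The only mild subtlety (and the place where I would be careful) is the bookkeeping with $C^\infty(Q)$-modules versus pointwise statements, to justify that annihilating $\rho(\sigma)$ for every section $\sigma$ really amounts to annihilating every smooth vector field that is pointwise tangent to $\mathcal{D}$, and similarly for the flag $\mathcal{D}^k$. Once that is set up, the rest is a one-line Leibniz computation plus induction, which makes the argument essentially formal; no further obstacle is expected.
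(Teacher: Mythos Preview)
Your argument is correct. The paper does not actually supply a proof of this proposition; it is quoted from \cite{LeMaMa} and stated without argument, so there is nothing in the present paper to compare against. Your proof---observing that $\mathcal{N}=\{X\in\mathfrak{X}(Q):X(g)=0\}$ is simultaneously a $C^\infty(Q)$-submodule and a Lie subalgebra, hence swallows the entire flag $\mathcal{D}\subset\mathcal{D}^2\subset\cdots$ once it contains $\mathcal{D}$---is the expected one and is complete as written; the ``mild subtlety'' you flag is handled by exactly the module argument you already gave.
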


However, in some examples, the distribution ${\mathcal D}$ is not
completely nonholonomic. In such a case there is  $r \in \N_{>0}$ such that
${\mathcal D}^{r-1} \subsetneq {\mathcal D}^r = {\mathcal D}^{r+1}
\subsetneq TQ.$  This distribution ${\mathcal D}^r$ is the smallest
Lie subalgebra of ${\mathfrak X}(Q)$ containing ${\mathcal D}$. Let
us consider the associated  generalized foliation over $Q.$  The leaf of this foliation over a point $q
\in Q$, is just the orbit
$$ L = \{ \phi_{t_k}^{X_k}\circ ... \circ \phi_{t_1}^{X_1} (q) \in Q
\ : \ t_i \in \R, \mbox{ and } X_i \in {\mathcal D} \mbox{ with }
i=1,...,k, \ k \in \N_{>0} \}$$ where $\phi_{t_i}^{X_i}$ is the flow
of the vector field $X_i$ at time $t_i$(see
\cite{Orbit},\cite{sussman}).

\begin{theorem} \cite{LeMaMa}
Let $(E, \lcf \cdot , \cdot \rcf, \rho)$ be a  skew-symmetric
algebroid over a manifold $Q$. Consider the leaf $L$ of ${\mathcal
D}^r$ as described above. Then \begin{enumerate} \item It is induced
a skew-symmetric algebroid structure $(\lcf \cdot , \cdot
\rcf_L,\rho_L)$ on the vector bundle $\tau_L: E_L  \rightarrow L$
with $E_L:= \cup_{q\in L} E_q$. Moreover, the distribution
$\rho_L(E_L)$ on $L$ is completely nonholonomic.
\item If $f \in C^\infty(Q )$ is a
function such that $d^Ef =0$, then its restriction to $L$ is
constant.
\end{enumerate}
\end{theorem}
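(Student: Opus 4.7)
My plan is to treat the two parts separately: for (i) I would restrict the algebroid structure of $E$ to $L$ and verify that bracket and anchor descend, while for (ii) I would propagate the condition $d^E f=0$ along the flows that generate the orbit $L$.

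For (i), I would first recall that by the Stefan-Sussmann theorem $L$ is an immersed submanifold with $T_qL=\mathcal{D}^r_q$ for every $q\in L$. I would then set $E_L=\bigcup_{q\in L}E_q$ with its natural vector bundle structure and take $\rho_L:=\rho|_{E_L}$; this lands in $TL$ because $\rho(E_q)=\mathcal{D}_q\subseteq\mathcal{D}^r_q=T_qL$. For the bracket, given $\sigma_1,\sigma_2\in\Gamma(E_L)$ I would extend them (using a partition of unity adapted to the immersed submanifold structure) to $\tilde\sigma_1,\tilde\sigma_2\in\Gamma(E)$ and set $\lcf\sigma_1,\sigma_2\rcf_L:=\lcf\tilde\sigma_1,\tilde\sigma_2\rcf|_L$. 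By bilinearity and skew-symmetry, independence of the extensions reduces to showing $\lcf\tilde\sigma,\tilde\gamma\rcf|_L=0$ whenever $\tilde\sigma|_L=0$. Writing $\tilde\sigma=\sum_i f_i\eta_i$ in a local frame of $E$ (so $f_i|_L=0$), the Leibniz rule produces
\[
\lcf\tilde\sigma,\tilde\gamma\rcf=\sum_i f_i\lcf\eta_i,\tilde\gamma\rcf-\sum_i \rho(\tilde\gamma)(f_i)\,\eta_i,
\]
whose first sum vanishes on $L$ because each $f_i$ does, and whose second vanishes on $L$ because $\rho(\tilde\gamma)$ is tangent to $L$ and therefore annihilates functions vanishing on $L$. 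Skew-symmetry and the Leibniz identity for $\lcf\cdot,\cdot\rcf_L$ transfer directly from those of $\lcf\cdot,\cdot\rcf$. Finally, since brackets of vector fields tangent to $L$ commute with restriction, the flag $(\rho_L(E_L))^k$ coincides with $\mathcal{D}^k|_L$ and stabilizes at $\mathcal{D}^r|_L=TL$, yielding complete nonholonomicity on $L$.

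For (ii), I would translate $d^E f=0$ into $\rho(\sigma)(f)=0$ for every $\sigma\in\Gamma(E)$, i.e.\ $X(f)=0$ for every $X\in\mathcal{D}$. Using $[X,Y](f)=X(Yf)-Y(Xf)$, an induction on the flag $\mathcal{D}\subset\mathcal{D}^2\subset\cdots\subset\mathcal{D}^r$ then shows $Z(f)=0$ for all $Z\in\mathcal{D}^r$; in particular, each flow $\phi_t^X$ with $X\in\mathcal{D}$ preserves $f$. Given $q_1,q_2\in L$, the orbit description supplies $X_i\in\mathcal{D}$ and $t_i\in\R$ with $q_2=\phi^{X_k}_{t_k}\circ\cdots\circ\phi^{X_1}_{t_1}(q_1)$, and iterating the invariance $f\circ\phi^{X_i}_{t_i}=f$ gives $f(q_2)=f(q_1)$, so $f|_L$ is constant.

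The main obstacle I expect is establishing well-definedness of the bracket in (i): one must simultaneously exploit the Leibniz rule and the tangency $\rho(\Gamma(E))\subset TL$ along $L$, and rely on the immersed (not necessarily embedded) submanifold structure of $L$ to extend local sections of $E_L$ to global sections of $E$ so that the construction is genuinely independent of the chosen extensions.
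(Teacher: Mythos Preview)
The paper does not actually prove this theorem: it is quoted as a result from \cite{LeMaMa} and stated without proof. So there is no ``paper's own proof'' to compare against here.

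That said, your argument is sound and is essentially the standard one. A couple of minor remarks. In part (ii), the induction up to $\mathcal{D}^r$ is not needed for the orbit argument you give afterwards: since the orbit $L$ is described as the set reachable by concatenations of flows $\phi^{X_i}_{t_i}$ with $X_i\in\mathcal{D}$ (not $\mathcal{D}^r$), the invariance $f\circ\phi^{X}_{t}=f$ for $X\in\mathcal{D}$ already suffices to conclude $f|_L$ is constant. The induction does yield the slightly stronger statement $Z(f)=0$ for all $Z\in\mathcal{D}^r$, which is equivalent once you know $T_qL=\mathcal{D}^r_q$, so no harm done. In part (i), your caution about $L$ being only immersed is warranted; the cleanest way to handle it is to work in a distinguished chart for $L$ and use local extensions, which is exactly what your partition-of-unity remark is gesturing at.
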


Next, we will see that any skew-symmetric algebroid structure $(\lcf\cdot ,\cdot\rcf,\rho)$ on the vector bundle $\tau_E:E\to Q$  induces an almost Poisson linear bracket
$\{\cdot,\cdot\}:C^\infty(E^*)\times C^\infty(E^*)\to
C^\infty(E^*)$ on the space of functions on $E^*$, that is, $\{\cdot,\cdot\}$ is  a skew-symmetric $\R$-bilinear bracket which is a derivation in each argument with respect to the standard product of functions  and  with the extra property that
the bracket of two linear functions is again a linear function. Indeed,  this bracket is  characterized by  the following  relations (see
\cite{BLMM,GGU0,GU1,GU2,LeMaMa})
\begin{equation}\label{Poisson}
\begin{array}{c}
\{\widehat{{\sigma}},\widehat{{\gamma}}\}=-\widehat{\lcf
{\sigma},{\gamma}\rcf }, \qquad \{f\circ \tau_{{E}^*},
\widehat{\sigma}\}=\rho({\sigma})(f)\circ \tau_{{E}^*}\\[8pt]
 \{f\circ \tau_{{E}^*},g\circ \tau_{{E}^*}\}=0,
 \end{array}
 \end{equation}
for all $\sigma,\gamma\in \Gamma(E)$ and $f,g\in C^\infty(Q)$ and
where $\widehat{\zeta}:E^*\to \mathbb{R}$ is the linear
function associated with the section $\zeta\in \Gamma(E).$

Now, we will endow our skew-symmetric algebroid with an additional
structure: a distinguished section $\phi\in \Gamma(E^*)$  which
allows us to consider some interesting examples.

Let us consider a section  $\phi$ of $E^*$.  Denote by $\phi^\vee\in {\mathfrak X}(E^*)$ the vertical
lift of the section $\phi\in \Gamma(E^*)$, that is, the vector field
defined by
\[
\phi^\vee(\alpha)=(\phi(\tau_{E^*}(\alpha)))_{\alpha}^\vee,\;\;\;
\forall \alpha\in E^*,\] where
$\;_{\alpha}^\vee:E^*_{\tau_{E^*}(\alpha)} \to
T_{\alpha}(E^*_{\tau_{E^*}(\alpha)})$ is the canonical isomorphism
between the vector spaces $E^*_{\tau_{E^*}(\alpha)}$ and
$T_{\alpha}(E^*_{\tau_{E^*}(\alpha)}).$ Note that
\begin{equation}\label{vertical-lineal}
\phi^\vee(\widehat{\sigma})=\phi(\sigma)\circ \tau_{E^*}=\widehat{\sigma}\circ \phi\circ\tau_{E^*},
\end{equation}
for all $\sigma\in \Gamma(E)$.

Using (\ref{1.0}), (\ref{Poisson}) and (\ref{vertical-lineal}), we obtain the following formula which describes the differential of $\phi$   in terms of the linear bracket $\{\cdot,\cdot\}$ on $E^*$
\begin{equation}\label{d-b}
\begin{array}{rcl}
d^E\phi(\gamma,\sigma)\circ \tau_{E^*}&=&-(\{\phi^\vee(\widehat{\gamma}),\widehat{\sigma}\}
+ \{\widehat{\gamma},\phi^\vee(\widehat{\sigma})\}-\phi^\vee(\{\widehat{\gamma},\widehat{\sigma}\})\\[8pt]&=&-(\{\widehat{\gamma},\widehat{\sigma}\circ \phi\circ \tau_{E^*}\} -\{\widehat{\gamma},\widehat{\sigma}\}+\{\widehat{\gamma}\circ \phi\circ \tau_{E^*},\widehat{\sigma}\}),
\end{array}
\end{equation}
for all $\gamma, \sigma\in \Gamma(E)$. Thus,  $\phi\in \Gamma(E^*)$ is a $1$-cocycle, i.e., $d^E\phi=0$, if and only if \begin{equation}\label{invarian}
\phi^\vee(\{\varphi_1,\varphi_2\})=\{\phi^\vee(\varphi_1),\varphi_2\}
+ \{\varphi_1,\phi^\vee(\varphi_2)\},
\end{equation}
for all $\varphi_1,\varphi_2\in C^\infty(E^*)$. Moreover, equation
(\ref{invarian}) is equivalent to the fact that the linear bivector
$\Pi_{E^*}$ on $E^*$ associated with the bracket $\{\cdot,\cdot\}$
is invariant with respect to the vector field $\phi^\vee\in
{\mathfrak X}(E^*)$, i.e.,
\begin{equation}\label{invariante}
{\mathcal L}_{\phi^\vee}\Pi_{E^*}=0.
\end{equation}
In fact, from (\ref{Poisson}), (\ref{invarian}) and
(\ref{invariante}) one may conclude the following result:

\begin{proposition}
Let $\tau_E:E\to Q$ be a vector bundle over $Q$ and $\phi\in
\Gamma(E^*)$ a section of the dual bundle of $E$. Then, the
following statements are equivalent:
\begin{enumerate}
\item $E$ admits a skew-symmetric algebroid structure $(\lcf\cdot,\cdot\rcf,\rho)$  such
that $d^E\phi=0$.
\item There is a linear bivector $\Pi_{E^*}$ on $E^*$ which is invariant
with respect to the vertical lift $\phi^\vee\in {\mathfrak
X}(E^*)$ of $\phi$.
\end{enumerate}
\end{proposition}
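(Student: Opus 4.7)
The plan is to exploit the standard bijection, recalled in the excerpt through equation~(\ref{Poisson}), between skew-symmetric algebroid structures on $\tau_E:E\to Q$ and linear almost Poisson brackets $\{\cdot,\cdot\}$ on $E^*$ (equivalently, linear bivectors $\Pi_{E^*}$). Once this bijection is in place, the proposition reduces to identifying the cocycle condition $d^E\phi=0$ with the invariance ${\mathcal L}_{\phi^\vee}\Pi_{E^*}=0$. First I would spell out the correspondence in both directions. In the forward direction, given $(\lcf\cdot,\cdot\rcf,\rho)$, the formulas (\ref{Poisson}) define $\{\cdot,\cdot\}$ on pairs of linear and basic functions and extend uniquely by the Leibniz rule to $C^\infty(E^*)$, producing $\Pi_{E^*}$. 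In the reverse direction, a linear bivector on $E^*$ sends two linear functions to a linear function, a linear and a basic function to a basic function, and two basic functions to zero, so the same formulas (\ref{Poisson}) read in reverse determine $\lcf\sigma,\gamma\rcf\in\Gamma(E)$ and $\rho(\sigma)\in{\mathfrak X}(Q)$ unambiguously; the Leibniz identity for the bracket of sections then follows from the derivation property of the Poisson bracket.

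The heart of the argument is the passage between invariance of $\Pi_{E^*}$ under $\phi^\vee$ and the cocycle condition. Unwinding the Lie derivative of a bivector on functions, the equation ${\mathcal L}_{\phi^\vee}\Pi_{E^*}=0$ is, by definition, identity~(\ref{invarian}) evaluated on all pairs $(\varphi_1,\varphi_2)\in C^\infty(E^*)\times C^\infty(E^*)$. Since the defect appearing in~(\ref{invarian}) is a biderivation in $(\varphi_1,\varphi_2)$, it suffices to test it on a set of algebra generators, and a natural choice is the collection of linear functions $\widehat{\sigma}$ together with pullbacks $f\circ\tau_{E^*}$. Three cases then arise: if both entries are basic, both sides of~(\ref{invarian}) vanish because $\phi^\vee$ is vertical and $\{f\circ\tau_{E^*},g\circ\tau_{E^*}\}=0$; if one is basic and the other linear, a short computation based on~(\ref{vertical-lineal}) and $\phi^\vee(f\circ\tau_{E^*})=0$ shows that each of the three terms of~(\ref{invarian}) vanishes separately; if both are linear, formula~(\ref{d-b}) identifies the defect precisely with $-d^E\phi(\gamma,\sigma)\circ\tau_{E^*}$. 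Combining, (\ref{invarian}) holds on all of $C^\infty(E^*)$ if and only if $d^E\phi=0$, which, coupled with the bijection above, yields the claimed equivalence.

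The main obstacle I expect lies in the bookkeeping of the reverse direction and in the biderivation reduction: one has to check that a linear bivector on $E^*$ genuinely produces a well-defined bracket on $\Gamma(E)$ and an anchor $\rho:E\to TQ$ satisfying the Leibniz-type property, and that biderivations on $C^\infty(E^*)$ are determined by their values on the generators $\widehat{\sigma}$ and $f\circ\tau_{E^*}$. Both facts become transparent in fibre coordinates on $E^*$ adapted to $\tau_{E^*}$, but they are where attention is needed; the rest of the proof is a direct matching of formulas~(\ref{d-b}) and~(\ref{invarian}).
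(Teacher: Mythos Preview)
Your proposal is correct and follows essentially the same route as the paper: the paper's ``proof'' is simply the sentence ``from (\ref{Poisson}), (\ref{invarian}) and (\ref{invariante}) one may conclude the following result,'' which is precisely the bijection-plus-identification argument you outline. Your version is in fact more detailed, since you explicitly carry out the biderivation reduction on the three types of generator pairs that the paper leaves implicit when passing from (\ref{d-b}) to (\ref{invarian}) for arbitrary $\varphi_1,\varphi_2\in C^\infty(E^*)$.
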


\subsection{Hamiltonian dynamics}\label{Sec:HamDyn}

Let $(E,\lcf\cdot,\cdot\rcf,\rho)$ be a skew-symmetric algebroid
over $Q$ of rank $n$, and  $\phi\in \Gamma(E^*)$ be a $1$-cocycle of
$E$. Denote by $\widehat{\phi}:E\to \mathbb{R}$ the corresponding
linear function induced by $\phi$ on $E$ and suppose that, for all
$q\in Q,$ $\widehat{\phi}_{|E_q}\not=0.$ Then, one may consider the
affine bundle
$$\tau_{\mathcal A}:{\mathcal A}:=\widehat{\phi}^{-1}(1)\to Q$$ of rank $n-1$ with associated
vector bundle $\tau_V:V:=\widehat{\phi}^{-1}(0)\to Q.$ Note that $V$
is a skew-symmetric algebroid over $Q$ with structure
$(\lcf\cdot,\cdot\rcf_V,\rho_V)$ given by
$$i_V\circ \lcf \sigma,\gamma\rcf_V=\lcf i_V(\sigma),i_V(\gamma)\rcf,\;\;\;
\rho_V(\sigma)=\rho(i_V(\sigma))$$
for all $\sigma,\gamma\in \Gamma(V)$, where $i_V:V\to E$ is the
canonical inclusion.  Thus, we have the
corresponding linear almost Poisson 2-vector $\Pi_{V^*}$ on $V^*$.

On the other hand, the map $\mu:= i_V^* : E^*\to V^*$ defines an
affine bundle of rank $1$ modeled over the trivial vector bundle
$pr_1:V^*\times \mathbb{R}\to V^*$ ({\it an AV-bundle } in the
terminology of \cite{GGU}). Using (\ref{Poisson}) and the fact that
the canonical inclusion $i_V:V\to E$ is a skew-symmetric algebroid
monomorphism, we deduce that $\mu:E^*\to V^*$ is an almost Poisson
morphism. Thus, if $\alpha_q\in E^*_q$, we have that the following
diagram is commutative
$$
\xymatrix{
T^*_{\alpha_q}E^*\ar[rr]^{\#_{\Pi_{E^*}}(\alpha_q)}&&T_{\alpha_q}{E}^*\ar[d]^{T_{\alpha_q}\mu}\\
T^*_{\mu(\alpha_q)}V^*\ar[u]^{T^*_{\alpha_q}\mu}\ar[rr]^{\#_{\Pi_{V^*}}(\mu(\alpha_q))}&&T_{\mu(\alpha_q)}V^*}
$$

Here, $\#_{\Pi_{E^*}}:T^*E^*\to T E^*$ (respectively,
$\#_{\Pi_{V^*}}:T^*V^*\to TV^*)$ is the morphism of
$C^\infty(Q)$-modules induced by the almost Poisson bivector
$\Pi_{E^*}$ (respectively, $\Pi_{V^*}$).

Using again (\ref{Poisson}), we also deduce that
$$
d^Vf=\mu\circ d^Ef \mbox{ and } d^V(\mu\circ
\alpha)=\wedge^2\mu\circ d^E\alpha, \;\;\; f\in C^\infty(M),\;\;\;
\alpha\in \Gamma(E^*), $$ where $\wedge^2 \mu : \wedge^2E^*
\rightarrow \wedge^2V^*$ is the extension of $\mu$ to the
corresponding vector bundles.

Note that the set of the global sections $\Gamma(\mu)$, of $\mu$, is
an affine space modeled over $C^\infty(V^*)$. In addition, if $h\in
\Gamma(\mu)$ then $\mu(\alpha_q - h(\mu(\alpha_q)))=0,$ for $q \in Q$
and $\alpha_q \in E^*_q$. Thus, one may define a function $F_h\in
C^\infty(E^*)$ characterized by
\begin{equation}\label{h}
\alpha_q-h(\mu(\alpha_q))=F_h(\alpha_q)\phi(q),
\end{equation}
 for all $q\in Q$ and for all $\alpha_q\in E^*_q$. Moreover, we have
$$(\phi^\vee(F_h))(\alpha_q) = \left(
 \frac{d}{dt}_{|t=0} F_h (\alpha_q+t\phi(q)) \right) =
 1.$$ Therefore, it follows that
 $\phi^\vee(F_h)=1$. In fact,  there is a one-to-one correspondence between $\Gamma(\mu)$
and the set of functions $F$ on $E^*$ which satisfy the relation
\begin{equation}\label{F}
\phi^\vee(F)=1,
\end{equation}
(see \cite{GGU}).

In what follows, we will associate to each section $h \in
\Gamma(\mu)$, a vector field on $V^*$. From (\ref{invarian}) and
(\ref{F}), we deduce that, for every section $h$ of the bundle
$\mu:E^*\to V^*$ and $G\in C^\infty(V^*),$  the function $\{G\circ
\mu,F_h\}$ is $\mu$-projectable (note that $Ker T_{\alpha_q}\mu =
<\phi^\vee(\alpha_q)>$). Thus, for each $h \in \Gamma(\mu)$ we can
consider a vector field $R_h$ on $V^*$ which is characterized by
\begin{equation}\label{2.11'}
R_h(G)\circ \mu=\{G\circ \mu, F_h\},\;\;\;\; G\in C^\infty(V^*).
\end{equation}
This vector field is called the {\it hamiltonian vector field
associated with the section $h$}. If ${\mathcal
H}^{\Pi_{E^*}}_{F_h}\in {\mathfrak X}(E^*)$ is the hamiltonian
vector field associated with the function $F_h\in C^\infty(E^*)$
with respect to the linear almost Poisson bracket $\Pi_{E^*}$,
i.e.,
\[
{\mathcal H}^{\Pi_{E^*}}_{F_h}=-i_{dF_h}\Pi_{E^*}
\]
then, from (\ref{2.11'}), we deduce that
\begin{equation}\label{Hamilton}
R_h\circ \mu=T\mu\circ {\mathcal H}_{F_h}^{\Pi_{E^*}}.
\end{equation}

The integral curves of the vector field $R_h$ are the solutions of
{\it the Hamilton equations } for $h$.
\subsection{Local expressions}\label{section-2.3}

Let $E$ be a vector bundle on $Q$ of rank $n$, with a skew-symmetric
algebroid structure $(\lcf\cdot,\cdot\rcf,\rho)$.

Fixed a section  $\phi$  of  $E^*$ such that $d^E \phi =0$ and
$\widehat{\phi}_{|E_q}\not=0$ for all $q\in Q$. Then it is induced a
local basis $\{e_0, e_a\}_{a=1,\dots, n-1}$ of $E$ adapted to the
$1$- section $\phi$ in the sense that $\phi(e_0)=1$ and
$\phi(e_a)=0.$  In terms of this basis we have the {\it local
structure functions}, $\C_{ab}^c,\rho_a^i, \C_{0b}^c,\rho_0^i \in
C^\infty(Q)$ of $E$ defined by
$$
\lcf e_a,e_b\rcf={\mathcal C}_{ab}^c e_c, \quad \lcf
e_0,e_b\rcf={\mathcal C}_{0b}^c e_c \quad \mbox{and} \quad
\rho(e_a)=\rho_a^i\frac{\partial }{\partial q^i}, \quad
\rho(e_0)=\rho_0^i\frac{\partial }{\partial q^i} .$$

Note that the condition $d^E\phi=0$ implies that $\C_{ab}^0=0$, for all
$a,b \in \{0,\dots, n-1\}$.

Moreover, with respect to the induced local coordinates
$(q^i,p_0,p_a)$ on $E^*$, the local expressions of the vector field
$\phi^\vee\in {\mathfrak X}(E^*)$ and the linear almost Poisson
bivector $\Pi_{E^*}$ are
$$\begin{array}{rcl}\phi^\vee&=&\displaystyle\frac{\partial }{\partial p_0},\\
\Pi_{E^*}&=&\rho_0^i\displaystyle\frac{\partial }{\partial
q^i}\wedge \displaystyle\frac{\partial }{\partial p_0} +
\rho_a^i\displaystyle\frac{\partial }{\partial q^i}\wedge
\displaystyle\frac{\partial }{\partial p_a}-\C_{0b}^c
p_c\displaystyle\frac{\partial }{\partial
p_0}\wedge\displaystyle\frac{\partial }{\partial p_b}
-\displaystyle\frac{1}{2}\C_{ab}^c p_c \displaystyle\frac{\partial
}{\partial p_a}\wedge \displaystyle\frac{\partial }{\partial
p_b}.\end{array}$$

If $(q^i,p_a)$ are the corresponding coordinates of
$V^*,$ the local expression of $\mu:E^*\to V^*$ is
$$\mu(q^i,p_0,p_a)=(q^i,p_a).$$

Let $h:V^*\to E^*$ be a section of $\mu$ whose local expression is
$$h(q^i,p_a)= (q^i,-H(q^i,p_b),p_a)$$ where  $H$ is a local function of $V^*.$ Then the associated function $F_h :E^* \rightarrow \R$ is
\begin{equation}\label{1.11'}
F_h(q^i,p_0,p_a)=p_0+H(q^i,p_a).
\end{equation}

Moreover, the local expression of the Hamiltonian vector field
associated with this section $h:V^*\to E^*$ is given by
$$
R_{h}=(\rho_0^i + \rho_a^i \frac{\partial H}{\partial
p_a})\frac{\partial }{\partial q^i} + (-\rho_b^i \frac{\partial
H}{\partial q^i} + (\C_{0b}^c + \C_{ab}^c \frac{\partial H
}{\partial p_a})p_c)\frac{\partial }{\partial p_b}.
$$

Thus, the Hamilton equations are
\[
\frac{dq^i}{dt}=\rho_0^i + \rho_a^i\frac{\partial H}{\partial p_a},
\qquad  \frac{dp_b}{dt}=-\rho_b^i\frac{\partial H}{\partial q^i} +
(\C_{0b}^c + \C_{ab}^c\frac{\partial H}{\partial p_a})p_c.
\]
A Lagrangian version of these equations was considered in \cite{IMMS} (see also \cite{DianaT}).

It is important to note that the previous dynamics on $V^*$ has a
{\it dissipative character}. In fact, in the case when the AV-bundle
$\mu:E^*\to V^*$ is trivial, then the local function $H$ is global
and it is the hamiltonian function on $V^*$. In addition,
\[
R_h(H)\circ \mu=\{H\circ \mu,F_h\}\not=0.
\]
The local expression of this dissipative term  is
\[
\{H\circ \mu,F_h\}=\rho_0^i\frac{\partial H}{\partial
q^i}+\C_{0b}^cp_c\frac{\partial H}{\partial p_b}.
\]

\subsection{Examples}
Next, we will describe two interesting examples of skew-symmetric algebroids which will be useful for the mathematical description of the mechanical systems considered in this paper.


\begin{example}\label{ejemplo} {\rm Consider a  Lie algebroid structure $(\lcf\cdot,\cdot\rcf,\rho)$
(or more generally a skew-symmetric algebroid structure)  on a
vector bundle $\tau_{\bar{E}}: {\bar{E}}\to Q$ of rank $n-1$ and $F:{\bar{E}}\to
{\bar{E}}$ a homomorphism of vector bundles (over the identity of
$Q$). Then, on the vector bundle $\tau_{\mathbb{R}\times
{\bar{E}}}:\mathbb{R}\times \bar{E}\to Q$, it is induced a
skew-symmetric algebroid structure $(E:= \mathbb{R}\times \bar{E},
\lcf\cdot,\cdot\rcf_{\mathbb{R}\times
\bar{E}},\rho_{\mathbb{R}\times \bar{E}})$ given by
\begin{equation}\label{1Ex}\begin{array}{rcl} \lcf (f,\sigma),(g,\gamma)\rcf_{\mathbb{R}\times \bar{E}}&=&(\rho(\sigma)(g)-\rho(\gamma)(f),
\lcf \sigma,\gamma\rcf+gF(\sigma)-fF(\gamma)),
\\[5pt] \rho_{\mathbb{R}\times \bar{E}}(f,\sigma)&=&\rho(\sigma),
\end{array} \end{equation}
for all $(f,\sigma),(g,\gamma)\in \Gamma(E)= C^\infty(Q)\times
\Gamma(\bar{E})\cong \Gamma(\mathbb{R}\times \bar{E})$.

Note that the space $\Gamma(\wedge^2(\mathbb{R}\times \bar{E}^*))$
may be identified with $ \Gamma(\bar{E}^*)\oplus\Gamma(\wedge^2
\bar{E}^*)$. Under this identification, for $(f, \alpha)\in
C^{\infty}(Q)\times\Gamma(\bar{E}^*)\cong \Gamma(\mathbb{R}\times
\bar{E}^*)$, we obtain that
\begin{equation}\label{dER}
\begin{array}{rcl}
d^{\mathbb{R}\times \bar{E}}(f,\alpha)&=&(F^*\alpha - d^{\bar{E}}f,
d^{\bar{E}}\alpha),\end{array}
\end{equation}
where $F^*\alpha$ is a section of $\bar{E}^*$ defined by
$$
(F^*\alpha)(\sigma)=\alpha(F(\sigma)),\,\;\; \forall\sigma\in
\Gamma(\bar{E}).$$

From (\ref{dER}), we have that  $(1,0) \in \Gamma(\R\times \bar
E^*)$ is a $1$-cocycle, and  its vertical lift is just the vector
field $\frac{\partial}{\partial p_0}$ on
$\mathbb{R}\times\bar{E}^*$, with $p_0$ the global coordinate on
$\mathbb{R}$.

In this case, the linear almost Poisson bivector on $\mathbb{R}\times \bar{E}^*$ is given by
\[
\Pi_{\mathbb{R}\times\bar{E}^*}=\Pi_{\bar{E}^*} + \frac{\partial }{\partial
p_0}\wedge (F^*)^\vee,
\]
where $\Pi_{\bar{E}^*}$ is the linear Poisson bivector on
$\bar{E}^*$ induced by the Lie algebroid structure on $\bar{E}$ and
$(F^*)^\vee$ is the vector field on $\bar{E}^*$ defined by
\[
(F^*)^\vee:\bar{E}^*\to T\bar{E}^*,\;\;\;\
(F^*)^\vee(\alpha)=F^*(\alpha)_\alpha^\vee.
\]

Here,  $F^*:\bar{E}^*\to \bar{E}^*$ is the dual morphism of $F:\bar{E}\to \bar{E}.$

 On the other hand, the affine bundle associated with
$(\mathbb{R}\times \bar{E},\lcf\cdot,\cdot\rcf_{\mathbb{R}\times
\bar{E}}, \rho_{\mathbb{R}\times \bar{E}},(1,0))$ is just
$\tau_{\{1\}\times \bar{E}}:\{1\}\times\bar{E} \cong \bar E \to Q$
with associated vector bundle $\tau_{\{0\}\times
\bar{E}}:\{0\}\times\bar{E} \cong \bar E\to Q$.

Moreover, the associated AV-bundle $\mu:\mathbb{R}\times\bar{E}^*\to
\bar{E}^*$ is just the trivial affine bundle over $\bar{E}^*$ of
rank $1$. Therefore, there is a one-to-one correspondence between sections
 $h:\bar{E}^*\to
\mathbb{R}\times\bar{E}^*$ of this affine bundle and  functions
$H:\bar{E}^*\to \mathbb{R}.$ Furthermore, the hamiltonian vector field
associated with $h$ is just
\begin{equation}\label{1.14} R_h={\mathcal
H}_H^{\Pi_{\bar{E}^*}} - (F^*)^\vee,
\end{equation}
 where ${\mathcal
H}_H^{\Pi_{\bar{E}^*}}\in {\mathfrak X}(\bar{E}^*)$ is the hamiltonian vector
field associated with $H$ with respect to the linear Poisson
structure $\Pi_{\bar{E}^*}.$

Note that, in this case, the dissipative term is given by
\[
R_h(H)=-(F^*)^\vee(H).
\]

Now, we will give some expressions in coordinates of the above
objects. Let us consider local coordinates $(q^i)$
 on the manifold $Q$ and a local basis $\{\bar{e}_a\}_{a=1,\dots, n-1}$ of $\Gamma(\bar{E}).$

A local basis of sections of ${E}=\R\times\bar{E}$ is $\{e_0,
e_a\}_{a=1,\ldots, n-1}$, where $e_0=(1,0)$ and $e_a=(0,\bar{e}_a)$.
Locally, the homomorphism of vector bundles $F: \bar{E}\to \bar{E}$
is given by the functions $F^b_a\in C^{\infty}(Q)$, where
 $F(e_a)=F^b_a e_b$.
 Then,
$$ \begin{array}{rclcrcl}
\lcf e_0, e_a\rcf_{\R\times \bar{E}}&=& -F_a^b e_b, & \ & \lcf e_a,
e_b\rcf_{\R\times \bar{E}}&=& {\mathcal C}_{ab}^c e_c\\ [8pt]
\rho_E(e_0)&=&0, & \ &
\rho_E(e_a)&=&\rho_a^i\frac{\partial}{\partial q^i}.
\end{array}$$
The linear almost Poisson bivector on $E^*=\R\times \bar{E}^*$ is now:
\begin{eqnarray*}
\Pi_{E^*}&=& \rho_a^i\displaystyle\frac{\partial }{\partial
q^i}\wedge \displaystyle\frac{\partial }{\partial p_a}+F_b^c
p_c\displaystyle\frac{\partial }{\partial
p_0}\wedge\displaystyle\frac{\partial }{\partial p_b}
-\displaystyle\frac{1}{2}\C_{ab}^c p_c \displaystyle\frac{\partial
}{\partial p_a}\wedge \displaystyle\frac{\partial }{\partial p_b}.
\end{eqnarray*}
Given a hamiltonian function $H: \bar{E}^*\to \R$ then the Hamilton
equations of motion are:
\begin{eqnarray*}
\frac{dq^i}{dt}&=& \rho^i_a\frac{\partial H}{\partial  p_a}\\
\frac{dp_b}{dt}&=&-\rho^i_b\frac{\partial H}{\partial
q^i}+\C^c_{ab}p_c\frac{\partial H}{\partial p_a}-F_b^a  p_a
\end{eqnarray*}

}
\end{example}


\begin{example}\label{projector}{\rm  (see \cite{LeMaMa}) Let $(E,\lcf\cdot,\cdot\rcf,\rho)$ be a Lie algebroid on a manifold $Q$.
Suppose that we have a vector subbundle $D$ of $E$ and a projector,
i.e., a vector bundle morphism ${\mathcal P}:E\to D$ (over the
identity of $Q$) such that ${\mathcal P}_{|D}=id.$  Denote by
$i_D:D\to E$ the natural inclusion from $D$ to $E.$ Then, we may
induce a skew-symmetric algebroid structure on $D$ as follows
$$\lcf \sigma,\gamma\rcf_D={\mathcal P}(\lcf i_D\circ \sigma,i_D\circ \gamma\rcf),\;\;\;\rho_D(\sigma)=\rho(i_D\circ \sigma),$$
for all $\sigma,\gamma\in \Gamma(D)$.
Note that, in general, $(\lcf\cdot,\cdot\rcf_D,\rho_D)$ is not a Lie algebroid structure on $D$.} \end{example}


\section{Hamilton-Jacobi equation, skew-symmetric algebroids with a 1-cocycle
and linear almost Poisson morphisms}\label{section2}

\subsection{Hamilton-Jacobi equation}
Let $\tau_E:E\to Q$ be a vector  bundle, of rank $n$, on the
manifold $Q$ with a skew-symmetric algebroid structure
$(\lcf\cdot,\cdot\rcf,\rho).$

Consider $\phi\in \Gamma(E^*)$  a $1$-cocycle of $E$ which satisfies  the following condition:
$$
\widehat{\phi}_{|E_q}\not=0, \mbox{ for  all  }q\in Q. $$ Then, as
we have shown in Section \ref{Sec:HamDyn}, the vector bundle
$\tau_V:V:=\widehat{\phi}^{-1}(0)\to Q$ of rank $n-1$ admits a
skew-symmetric algebroid structure which we denote by
$(\lcf\cdot,\cdot\rcf_V,\rho_V).$

If $h$ is a section of the corresponding AV-bundle $\mu:E^*\to
V^*$ then $(E,\lcf\cdot,\cdot\rcf,\rho,\phi,h)$ is said to be a
{\it Hamiltonian system on $E$.}

In such a case, for each section $\alpha$ of $V^*$, one may define a vector field $R_h^\alpha$ on $Q$ as follows
\begin{equation}\label{Rha}
R_h^\alpha=T\tau_{V^*}\circ R_h\circ \alpha,
\end{equation}
where $R_h$ is the hamiltonian vector field associated with the
section $h$.

On the other hand, we may introduce the following map
$$\widetilde{\;\;\;\;}:\Omega^1(E^*)\to \Gamma(E),\;\;\; \omega\to \widetilde{\omega},$$
where $\widetilde{\omega}$ is characterized by
\begin{equation}\label{3.2'}
\beta(\widetilde{\omega})=\omega(\beta^\vee)\circ  h\circ \alpha,
\qquad \mbox{for all} \ \beta\in\Gamma(E^*).
\end{equation}
We remark that $\widetilde{\omega}\in \Gamma(E)$ since, if
 $f\in C^\infty(Q)$ then $(f\beta)^\vee=(f\circ \tau_{E^*})\beta^\vee$ and
\begin{equation}\label{3.2''}
\tau_{E^*}\circ h \circ \alpha=id.
\end{equation}
Moreover, it follows that
\begin{equation}\label{3.2'''}
\widetilde{d\widehat{\gamma}}=\gamma,\;\;\; \widetilde{d(f\circ
\tau_{E^*})}=0,\;\;\; \mbox{ and } \; \; \; \widetilde{F\omega} =
(F\circ h\circ \alpha) \widetilde \omega,
\end{equation} for $\gamma\in \Gamma(E)$, $f\in C^\infty(Q)$, $F\in C^\infty(E^*)$ and $\omega \in \Omega^1(E^*)$. We will denote by $\zeta_h^\alpha$ the
section of $E$ given by
\begin{equation}\label{3.2}
\zeta_h^\alpha=\widetilde{dF_h}.
\end{equation}

Now, we state the main result of this paper.

\begin{theorem}\label{main}
Let $(E,\lcf\cdot,\cdot\rcf,\rho,\phi,h)$ be a {Hamiltonian system
} on $E$. If $\alpha\in \Gamma(V^*)$, we have that the following
statements are equivalent:
\begin{enumerate}
\item[$(i)$] If $c:I\to Q$ is an integral curve of $R^\alpha_h\in
{\mathfrak X}(Q)$ then $ \alpha\circ c:I\to V^*$ is an integral
curve of $R_h\in {\mathfrak X}(V^*).$
\item[$(ii)$] $\alpha\in \Gamma(V^*)$ satisfies the {\bf
Hamilton-Jacobi equation}, i.e.,
$$\mu\circ ( i_{\zeta_{h}^{\alpha}}d^E(h\circ \alpha))=0.$$
 \end{enumerate}
\end{theorem}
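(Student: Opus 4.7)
My plan is to rephrase $(i)$ as an infinitesimal tangency condition, identify $R_h^\alpha$ intrinsically, and reduce $(i)\Leftrightarrow(ii)$ to a single algebraic identity verifiable in the adapted basis of Section~\ref{section-2.3}. By the chain rule $(\alpha\circ c)'(t) = T_{c(t)}\alpha(R_h^\alpha(c(t)))$, so $(i)$ is equivalent to the pointwise equality $R_h\circ\alpha = T\alpha\circ R_h^\alpha$; that is, to $R_h$ being tangent to $\alpha(Q)\subset V^*$ at every point. The key preliminary I would establish is
$$R_h^\alpha = \rho(\zeta_h^\alpha) \in \mathfrak{X}(Q).$$
Indeed, for $f\in C^\infty(Q)$, \eqref{Rha} and \eqref{2.11'} together with $\mu\circ h=\id$ give $R_h^\alpha(f)(q) = \{f\circ\tau_{E^*},F_h\}((h\circ\alpha)(q))$. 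By \eqref{Poisson} and the Leibniz rule, the bracket with a basic function is a derivation whose value on linear functions is controlled by the anchor, so it depends only on the fiber derivative $\mathcal{F}F_h$ of its second argument; the defining property \eqref{3.2'} of $\widetilde{(\cdot)}$ together with the normalization $\phi^\vee(F_h)=1$ (see \eqref{F}) identifies $\mathcal{F}F_h\circ h\circ\alpha$ with $\zeta_h^\alpha$.

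Tangency of $R_h$ to $\alpha(Q)$ can then be tested on a local family of functions generating the ideal of $\alpha(Q)$, namely $\psi_\gamma := \hat\gamma - \alpha(\gamma)\circ\tau_{V^*}$, with $\gamma$ in a frame of $\Gamma(V)$. Applying the identification to the second summand of $\psi_\gamma$ (since $\alpha(\gamma)\in C^\infty(Q)$), the condition $R_h(\psi_\gamma)\circ\alpha = 0$ becomes
$$R_h(\hat\gamma)\circ\alpha = \rho(\zeta_h^\alpha)(\alpha(\gamma))\qquad\text{for every }\gamma\in\Gamma(V).$$
The theorem then follows from the identity
\begin{equation*}
R_h(\hat\gamma)\circ\alpha - \rho(\zeta_h^\alpha)(\alpha(\gamma)) = -d^E(h\circ\alpha)(\zeta_h^\alpha,i_V\gamma), \tag{$\star$}
\end{equation*}
since $\mu:E^*\to V^*$ is dual to $i_V:V\hookrightarrow E$, so vanishing of the right-hand side for all $\gamma\in\Gamma(V)$ is precisely $\mu\circ i_{\zeta_h^\alpha}d^E(h\circ\alpha)=0$, i.e.\ $(ii)$.

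The main obstacle is proving $(\star)$: because $F_h$ is not linear on $E^*$, formula \eqref{d-b} does not apply verbatim. Expanding the right-hand side of $(\star)$ via \eqref{1.0} and using $(h\circ\alpha)(i_V\gamma)=\alpha(\gamma)$ cancels the term $\rho(\zeta_h^\alpha)(\alpha(\gamma))$ from both sides, reducing $(\star)$ to
$$\{\widehat{i_V\gamma},F_h\}\circ(h\circ\alpha) = -\rho(i_V\gamma)\bigl((h\circ\alpha)(\zeta_h^\alpha)\bigr) - (h\circ\alpha)\bigl(\lcf\zeta_h^\alpha,i_V\gamma\rcf\bigr).$$
I would verify this in the adapted basis $\{e_0,e_a\}$ of Section~\ref{section-2.3}: with $F_h = p_0 + H(q,p_a)$ and the local form of $\Pi_{E^*}$, the second-order derivatives of $H$ arising within $d^E(h\circ\alpha)(\zeta_h^\alpha,e_b)$ cancel internally, and the remaining first-order terms match $\{p_b,F_h\}((h\circ\alpha)(q))-\rho(\zeta_h^\alpha)(\alpha_b)$ after applying the antisymmetry of the structure functions $\mathcal{C}_{ab}^c$.
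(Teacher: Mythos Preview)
Your overall architecture matches the paper's: both reduce $(i)$ to the infinitesimal condition $R_h\circ\alpha=T\alpha\circ R_h^\alpha$, test it against basic and fiber-linear functions on $V^*$, and show equivalence with $(ii)$ via a single identity---your $(\star)$, the paper's \eqref{3.4'}. Your preliminary identification $R_h^\alpha=\rho(\zeta_h^\alpha)$ is correct (immediate from \eqref{local-section} and the coordinate form of $R_h$) and clarifies the structure, though the paper manages without it by using $(T\alpha\circ R_h^\alpha)(\widehat{\sigma})$ directly.

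The genuine difference lies in how the key identity is established. You verify $(\star)$ by a coordinate computation in the adapted frame, tracking the cancellation of the second-order terms $\rho_b^i\,\alpha_a\,\partial_{q^i}(\partial H/\partial p_a\circ\alpha)$---this works and is straightforward. The paper instead proves an intrinsic formula \eqref{w} expressing $d^E(h\circ\alpha)(\widetilde{\omega_1},\widetilde{\omega_2})$ for \emph{arbitrary} $\omega_1,\omega_2\in\Omega^1(E^*)$ in terms of $\Pi_{E^*}$: first checking it on pairs of fiber-linear $1$-forms via \eqref{d-b}, then on basic $1$-forms, and extending by $C^\infty(E^*)$-linearity using \eqref{3.2'''}. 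Specializing to $\omega_1=dF_h$, $\omega_2=d(\widehat{\sigma}\circ\mu)$ and using that $(h\circ\alpha\circ\tau_{E^*})^*dF_h=0$ (since $F_h\circ h\circ\alpha\equiv0$) yields \eqref{3.4'} with no coordinate bookkeeping. The paper's route is coordinate-free and arguably explains \emph{why} the cancellations happen; yours is shorter once the frame is fixed and avoids the auxiliary general formula. Either is acceptable.
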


\begin{proof}
From (\ref{Poisson}), (\ref{d-b}) and (\ref{3.2''}), we deduce that
for all $\gamma,\sigma\in \Gamma(E)$
\begin{equation}
\begin{array}{rcl}\label{3.3'}
d^E(h\circ \alpha)(\gamma,\sigma)&=&(\rho(\gamma)(\widehat{\sigma}\circ h\circ \alpha)\circ \tau_{E^*} + \{\widehat{\gamma},\widehat{\sigma}\}-\rho(\sigma)(\widehat{\gamma}\circ h\circ \alpha)\circ \tau_{E^*})\circ h\circ \alpha\\[5pt]
&\kern-60pt=& \kern-40pt-\Pi_{E^*}(d\widehat{\gamma},(h\circ \alpha\circ\tau_{E^*})^*(d\widehat{\sigma}))\circ h\circ\alpha + \Pi_{E^*}(d\widehat{\gamma}-(h\circ \alpha\circ \tau_{E^*})^*(d\widehat{\gamma}),d\widehat{\sigma})\circ h\circ \alpha.
\end{array}
\end{equation}

Moreover, using (\ref{3.2'''}), it follows that
\begin{equation}\label{3.3''}
0=d^E(h\circ \alpha)(\widetilde{d(f\circ
\tau_{E^*})},\widetilde{dF})\end{equation} and by (\ref{Poisson}),
(\ref{3.2''}) we obtain also that \begin{eqnarray}
0&=&-\Pi_{E^*}(d(f\circ \tau_{E^*}),(h\circ
\alpha\circ\tau_{E^*})^*(dF))\circ h\circ \alpha\; \label{3.31} \\ &
& \ + \Pi_{E^*}(d(f\circ \tau_{E^*}) -(h\circ \alpha\circ
\tau_{E^*})^*(d(f\circ \tau_{E^*})), dF)\circ h\circ \alpha
\nonumber
\end{eqnarray}
for all $F\in C^\infty(E^*)$ and $f\in C^\infty(Q).$

Therefore, from (\ref{3.2''}), (\ref{3.2'''}), (\ref{3.3'}),
(\ref{3.3''}) and (\ref{3.31}), we conclude that for all
$\omega_1,\omega_2\in \Omega^1(E^*)$
\begin{equation}\label{w}
\begin{array}{rcl}
d^E(h\circ \alpha)(\widetilde{\omega_1},\widetilde{\omega_2})&=&-\Pi_{E^*}(\omega_1,(h\circ \alpha\circ \tau_{E^*})^*\omega_2)\circ h\circ \alpha\\&& +
\Pi_{E^*}(\omega_1-(h\circ \alpha\circ \tau_{E^*})^*\omega_1,\omega_2)\circ h\circ \alpha.
\end{array}
\end{equation}

Next, we will denote by $i_V:V\to E$ the natural inclusion.
Then, if $\sigma\in \Gamma(V)$, considering in (\ref{w})  the particular case when $\omega_1=dF_h$ and $\omega_2=d(\widehat{\sigma}\circ \mu)=d(\widehat{i_V\circ \sigma})$, we have that
$$
d^E(h\circ \alpha)(\zeta_{h}^{\alpha},i_V\circ \sigma)={\mathcal H}_{F_h}^{\Pi_{E^*}}(\widehat{\sigma}\circ \alpha\circ \tau_{V^*}\circ \mu)\circ h\circ \alpha -
{\mathcal H}^{\Pi_{E^*}}_{F_n}(\widehat{\sigma}\circ \mu)\circ h\circ \alpha.$$
Note that $(h\circ \alpha\circ \tau_{E^*})^*(dF_h)=0$ (see (\ref{h})).

Now, using  (\ref{Hamilton}) we conclude that
\begin{equation}\label{3.4'}
d^E(h\circ \alpha)(\zeta_{h}^{\alpha},i_V\circ \sigma)= (T\alpha \circ R_h^\alpha)(\widehat{\sigma})-R_h(\widehat{\sigma})\circ \alpha.\end{equation}

Next, we remark that statement $(i)$ in the theorem is equivalent to
\begin{equation}\label{relation}
T\alpha\circ R_h^\alpha=R_h\circ \alpha.
\end{equation}

So, if this relation holds then, from (\ref{3.4'}), we deduce that $\mu\circ (i_{\zeta_h^\alpha}d^E(h\circ \alpha))=0.$

Conversely,  suppose that
\begin{equation}\label{3.4'''}
d^E(h\circ \alpha)(\zeta_{h}^{\alpha},i_V\circ \sigma)=0,\;\,\;\; \mbox{ for all }\sigma\in \Gamma(V).
\end{equation}
In order to prove (\ref{relation}), it is sufficient to see that the following two relations are satisfied:
\begin{enumerate}
\item[$(a)$] $R_h^\alpha(\widehat{\sigma}\circ \alpha)=R_h(\widehat{\sigma})\circ \alpha,$ for all $\sigma\in \Gamma(V),$
\vspace{6pt}
\item [$(b)$] $R_h^\alpha(f)=R_h(f\circ \tau_{V^*})\circ\alpha,$ for all $f\in C^\infty(Q)$.
\end{enumerate}
Statement $(a)$ is a consequence  of  Eq. (\ref{Rha}) and $(b)$
follows from (\ref{3.4'}) and (\ref{3.4'''}).
\end{proof}

In what follows, we write the local expression of the
Hamilton-Jacobi equation.

Consider local coordinates $(q^i)$ on $Q$ and a local basis
$\{e_0,e_a\}_{a=1,\dots, n-1}$ of $\Gamma(E)$ adapted to the
$1$-cocycle $\phi$ as in Subsection \ref{section-2.3}

Denote by $(q^i,p_0,p_a)$ (respectively, $(q^i,p_a)$) the
corresponding local coordinates on $E^*$ (respectively, $V^*$).
Then, the section $\alpha:Q\to V^*$ and the hamiltonian section
$h:V^*\to E^*$  are written in terms of these coordinates as
\[
\alpha(q^i)=(q^i,\alpha_a(q^i)),\,\;\;
h(q^i,p_a)=(q^i,-H(q^i,p_a),p_a). \]

On the other hand, a 1-form $\omega \in \Omega^1(E^*)$ can be
written in these coordinates as $\omega=\omega^i dq^i + \omega^a
dp^a + \omega^0 dp_0$ with $\omega^i, \omega^a, \omega^0 \in
C^\infty(E^*)$. Therefore, from (\ref{3.2'}) we obtain that the
section $\widetilde \omega \in \Gamma(E)$ is given by
$$\widetilde{\omega}=(\omega^a\circ h\circ \alpha) \, e_a+ (\omega^0\circ
h\circ \alpha)\, e_0.$$ Thus,
\begin{equation}\label{local-section}
\zeta_{h}^{\alpha}= \widetilde{d F_h} = e_0 + (\frac{\partial
H}{\partial p_a}\circ \alpha) e_a.
\end{equation}
Now, from  (\ref{1.0}) and (\ref{local-section}), the  Hamilton
Jacobi equation is given locally as follows
$$(\rho_0^i + \rho_b^i\frac{\partial H}{\partial p_b})\frac{\partial \alpha_a}{\partial q^i} + (\rho_a^i\frac{\partial H}{\partial q^i}-(C_{0a}^c-
C_{ab}^c\frac{\partial H}{\partial p_b})\alpha_c)=0,
$$
for all $a=1,\dots, n-1.$

\medskip

To finish this subsection, we will show some consequences of Theorem
\ref{main} which will be useful for the next examples.
\begin{corollary}\label{c1}
Let $(E,\lcf\cdot,\cdot\rcf,\rho,\phi,h)$ be a {Hamiltonian system }
on $E$. If $\beta\in \Gamma(E^*)$, then the following statements
are equivalent:
\begin{enumerate}
\item[$(i)$] If $c:I\to Q$ is an integral curve of $R^{\mu\circ \beta}_h\in
{\mathfrak X}(Q)$ then $ \mu\circ\beta\circ c:I\to V^*$ is an integral
curve of $R_h\in {\mathfrak X}(V^*).$
\item[$(ii)$] $\beta\in \Gamma(E^*)$ satisfies the {\bf
Hamilton-Jacobi equation}, i.e.,
$$\mu\circ i_{\zeta_{h}^{\mu\circ \beta}}d^E\beta + d^V(F_h\circ \beta)=0.$$
 \end{enumerate}
\end{corollary}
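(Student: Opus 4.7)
The plan is to apply Theorem \ref{main} with $\alpha := \mu\circ\beta \in \Gamma(V^*)$; once the equivalence of statement (i) of the corollary with statement (i) of Theorem \ref{main} is immediate by this choice, all that remains is to rewrite the Hamilton--Jacobi equation $\mu\circ(i_{\zeta_h^\alpha}d^E(h\circ\alpha))=0$ in terms of $\beta$ itself.

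The key observation is that the defining relation (\ref{h}) for $F_h$ reads, at the level of sections,
\[
h\circ\mu\circ\beta = \beta - (F_h\circ\beta)\,\phi.
\]
Applying $d^E$, invoking the hypothesis $d^E\phi=0$, and using the Leibniz rule for the almost differential gives
\[
d^E(h\circ\mu\circ\beta) = d^E\beta - d^E(F_h\circ\beta)\wedge\phi.
\]
Contracting with $\zeta_h^{\mu\circ\beta}$ and using the standard formula $i_X(\omega_1\wedge\omega_2)=\omega_1(X)\omega_2-\omega_2(X)\omega_1$ for $1$-forms yields
\[
i_{\zeta_h^{\mu\circ\beta}}d^E(h\circ\mu\circ\beta)
= i_{\zeta_h^{\mu\circ\beta}}d^E\beta
- \rho(\zeta_h^{\mu\circ\beta})(F_h\circ\beta)\,\phi
+ \phi(\zeta_h^{\mu\circ\beta})\,d^E(F_h\circ\beta).
\]

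Two small computations finish the argument. First, $\phi(\zeta_h^{\mu\circ\beta})=1$: by the defining property (\ref{3.2'}) of the tilde operation one has $\phi(\widetilde{dF_h}) = dF_h(\phi^\vee)\circ h\circ\mu\circ\beta = \phi^\vee(F_h)\circ h\circ\mu\circ\beta$, which equals $1$ by (\ref{F}). Second, $\mu\circ\phi=0$ as a section of $V^*$, because $\phi$ vanishes on $V=\widehat{\phi}^{-1}(0)$ by definition. Applying $\mu$ to the contracted identity, and using the intertwining relation $d^Vf=\mu\circ d^Ef$ for $f\in C^\infty(Q)$ recorded in Section~\ref{section1}, the $\phi$-term drops out and we obtain
\[
\mu\circ i_{\zeta_h^{\mu\circ\beta}}d^E(h\circ\mu\circ\beta)
= \mu\circ i_{\zeta_h^{\mu\circ\beta}}d^E\beta + d^V(F_h\circ\beta).
\]
So the Hamilton--Jacobi equation of Theorem \ref{main} for $\alpha=\mu\circ\beta$ coincides with the equation displayed in (ii), and the corollary follows.

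I do not anticipate any real obstacle; the statement is essentially a repackaging of Theorem \ref{main} via the affine decomposition $\beta = h\circ\mu\circ\beta + (F_h\circ\beta)\phi$. The only point that requires a moment's care is tracking signs and the correct identity $i_X(\omega_1\wedge\omega_2)=\omega_1(X)\omega_2-\omega_2(X)\omega_1$ when contracting a wedge of two one-forms; once this, $d^E\phi=0$, and $\mu\circ\phi=0$ are in place, the remaining calculation is purely formal.
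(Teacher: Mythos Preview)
Your proof is correct and follows essentially the same approach as the paper: both start from the identity $h\circ\mu\circ\beta=\beta-(F_h\circ\beta)\phi$ obtained from (\ref{h}), expand $d^E$ using $d^E\phi=0$, contract with $\zeta_h^{\mu\circ\beta}$, and then apply $\mu$ using $\phi(\zeta_h^{\mu\circ\beta})=1$, $\mu\circ\phi=0$, and $\mu\circ d^E f=d^V f$. The only cosmetic difference is that you display the contracted term $\rho(\zeta_h^{\mu\circ\beta})(F_h\circ\beta)\,\phi$ explicitly, whereas the paper writes it as $(i_{\zeta_h^{\mu\circ\beta}}d^E(F_h\circ\beta))\,\mu\circ\phi$ after applying $\mu$.
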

\begin{proof}
Using (\ref{h}), we deduce that the Hamilton-Jacobi equation for $\mu\circ \beta\in \Gamma(V^*)$ is
\begin{equation}\label{equ-1}\mu\circ (i_{\zeta_{h}^{\mu\circ \beta}}d^E(\beta - (F_h\circ \beta)\phi))=0.\end{equation}
Since $\phi$ is a $1$-cocycle then
\begin{equation}\label{equ-2}d^E(\beta-(F_h\circ \beta)\phi)=d^E\beta -d^E(F_h\circ \beta)\wedge \phi.\end{equation}
On the other hand, from (\ref{3.2'}), it follows that
$$
\phi(\zeta_{h}^{\mu\circ \beta})=1$$ and therefore, using
(\ref{equ-2}), we obtain that (\ref{equ-1}) is equivalent to
$$\mu\circ i_{\zeta_{h}^{\mu\circ \beta}}d^E\beta + \mu\circ d^E(F_h\circ \beta)-(i_{\zeta_{h}^{\mu\circ \beta}}
(d^E(F_h\circ \beta)))\mu\circ \phi=0.$$ Finally, the corollary is
an immediate consequence of  Theorem \ref{main} and the relations
$$\mu\circ d^E(F_h\circ \beta)=d^V(F_h\circ \beta)\;\;\mbox{ and }\;\; \mu\circ \phi=0.$$
 \end{proof}

\begin{corollary}\label{c2}
Let $(E,\lcf\cdot,\cdot\rcf,\rho,\phi,h)$ be a {Hamiltonian system }
on the vector bundle $\tau_E:E\to Q$ on the connected manifold $Q$. Suppose that the finitely generated distribution ${\mathcal V}$ defined by ${\mathcal V}_q:=\rho_V(V_q)$ for all $q\in Q$, is a completely nonholonomic distribution. If $\beta\in \Gamma(E^*)$ is a $1$-cocycle of $E^*$, then the following statements
are equivalent:
\begin{enumerate}
\item[$(i)$] If $c:I\to Q$ is an integral curve of $R^{\mu\circ \beta}_h\in
{\mathfrak X}(Q)$ then $ \mu\circ\beta\circ c:I\to V^*$ is an integral
curve of $R_h\in {\mathfrak X}(V^*).$
\item[$(ii)$] $\beta\in \Gamma(E^*)$ satisfies the {\bf
Hamilton-Jacobi equation}
$$F_h\circ \beta=constant.$$
 \end{enumerate}
\end{corollary}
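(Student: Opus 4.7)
The plan is to derive Corollary \ref{c2} as a direct specialization of Corollary \ref{c1}, the only new ingredient being the completely-nonholonomic hypothesis on $\mathcal{V}=\rho_V(V)$.

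First, I would apply Corollary \ref{c1} to the section $\beta\in\Gamma(E^*)$. By that corollary, statement $(i)$ is equivalent to
\[
\mu\circ i_{\zeta_{h}^{\mu\circ\beta}}\,d^E\beta + d^V(F_h\circ\beta)=0.
\]
Now I would use the standing hypothesis that $\beta$ is a $1$-cocycle of $E$, i.e.\ $d^E\beta=0$, which makes the first term vanish identically. Consequently the equivalence in Corollary \ref{c1} reduces, in the present setting, to the single condition
\[
d^V(F_h\circ\beta)=0,
\]
where $F_h\circ\beta\in C^\infty(Q)$ because $\tau_{E^*}\circ h\circ\mu\circ\beta=\mathrm{id}_Q$ implies $F_h\circ\beta$ depends only on $q\in Q$ (via the identification already used in the proof of Theorem \ref{main}).

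Next, I would invoke the characterization of $H^{0}(d^V)$ recalled in Section \ref{section1}: for a skew-symmetric algebroid $(V,\lcf\cdot,\cdot\rcf_V,\rho_V)$ over a connected manifold $Q$ whose anchor image $\mathcal{V}=\rho_V(V)$ is a completely nonholonomic distribution, one has $H^{0}(d^V)\cong\mathbb{R}$, i.e.\ every function $g\in C^\infty(Q)$ with $d^V g=0$ is constant. Applied to $g=F_h\circ\beta$, this turns the equation $d^V(F_h\circ\beta)=0$ into the condition $F_h\circ\beta=\mathrm{constant}$, which is precisely statement $(ii)$.

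Essentially no step is a serious obstacle here: the substantive work has already been done in Corollary \ref{c1} and in the proposition about completely nonholonomic algebroids. The only small point worth checking carefully is the direction $(ii)\Rightarrow(i)$, for which constancy of $F_h\circ\beta$ gives $d^V(F_h\circ\beta)=0$, and one then reads Corollary \ref{c1} in the reverse direction using $d^E\beta=0$. I would therefore write the proof as two short lines: reduce via Corollary \ref{c1}, then apply the completely-nonholonomic criterion for $H^{0}(d^V)\cong\mathbb{R}$.
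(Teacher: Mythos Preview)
Your argument is correct and is precisely the intended one: the paper gives no separate proof of Corollary \ref{c2}, leaving it as an immediate consequence of Corollary \ref{c1} together with the proposition that $H^0(d^V)\cong\mathbb{R}$ when $\rho_V(V)$ is completely nonholonomic over a connected base. The only cosmetic remark is that $F_h\circ\beta\in C^\infty(Q)$ simply because $\beta:Q\to E^*$ and $F_h:E^*\to\mathbb{R}$; no further identification is needed there.
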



\subsection{Linear almost Poisson morphisms and Hamilton-Jacobi equation}

As we pointed out in the introduction, one important advantage of
dealing with unconstrained hamiltonian systems on Lie algebroids, or
constrained systems on skew-symmetric algebroids, is that reduction
by symmetries can be naturally handled by considering morphisms
between Lie algebroids \cite{Weinstein99} (see also
\cite{CoLeMaMaMa}), or respectively morphism between skew-symmetric
algebroids, \cite{LeMaMa}. In the following section we deal with
morphisms between skew-symmetric algebroids with a 1-cocycle, in
order to show that the Hamilton Jacobi equation is preserved by such
morphisms.

Suppose that  $\tau_E:E\to Q$ and $\tau_{\bar{E}}: \bar{E}\to
\bar{Q}$ are vector bundles over $Q$ and $\bar{Q}$, respectively.
Consider a vector bundle morphism $(\Psi,\psi)$ between $E^*$ and
$\bar{E}^*$
\[
\xymatrix{
E^*\ar[d]^{\tau_{E^*}}\ar[rr]^{\Psi}&&\bar{E}^*\ar[d]^{\tau_{\bar{E}^*}}\\
Q\ar[rr]^{\psi}&&\bar{Q}}
\]

Denote by $\wedge^k\Psi:\wedge^kE^*\to \wedge^k\bar{E}^*$ the
corresponding vector bundle morphism on $\psi:Q\to \bar{Q},$ induced
by the pair $(\Psi,\psi),$ between the vector bundles $\wedge^kE^*\to
Q$ and $\wedge^k\bar{E}^*\to \bar{Q}$. A section $\alpha\in
\Gamma(\wedge^kE^*)$ is {\it $(\Psi,\psi)$-related } with a section
$\bar{\alpha}\in \Gamma(\wedge^k\bar{E}^*)$  if
$$
\wedge^k\Psi\circ \alpha=\bar{\alpha}\circ \psi.$$

\begin{definition} Let $(E,\lcf\cdot,\cdot\rcf_E,\rho,\phi,h)$ and
$(\bar{E},\lcf\cdot,\rcf_{\bar{E}},\bar{\rho},\bar{\phi},\bar{h})$
be hamiltonian systems on $E$ and $\bar{E},$ respectively. Suppose
that  $(\Psi,\psi)$ is a vector bundle morphism
between $E^*$ and $\bar{E}^*$. Then, the pair $(\Psi,\psi)$ is said
to be a hamiltonian morphism if:
\begin{enumerate}
\item $(\Psi,\psi)$ is an almost Poisson morphism, that is,
$$\{\bar{F_1}\circ \Psi,\bar{F_2}\circ \Psi\}_{E^*}=\{\bar{F}_1,\bar{F}_2\}_{\bar{E}^*}\circ \Psi, \mbox{ for } \bar{F}_1,\bar{F}_2\in C^\infty(\bar{E}^*)$$
where $\{\cdot,\cdot\}_{E^*}$ (respectively, $\{\cdot,\cdot\}_{\bar{E}^*}$) is the linear almost Poisson bracket on $E^*$ (respectively, $\bar{E}^*$);
\item $\phi$ and $\bar{\phi}$ are $(\Psi,\psi)-$related and
\item $F_{\bar{h}}\circ \Psi =F_h.$
\end{enumerate}
\end{definition}

Now, we prove the following result
\begin{proposition}\label{morphism}
Let $(E,\lcf\cdot,\cdot\rcf_E,\rho,\phi,h)$ and $(\bar{E},\lcf\cdot,\cdot\rcf_{\bar{E}},\bar{\rho},\bar{\phi},\bar{h})$ be  hamiltonian systems on $E$ and $\bar{E},$ respectively, and $(\Psi,\psi)$ be a hamiltonian morphism between $E^*$ and $\bar{E}^*$. Then:

\begin{enumerate}
\item There exists a linear almost Poisson morphism $\widehat{\Psi}:V^*\to \bar{V}^*$ (over $\psi$) such that the following diagram is commutative

\begin{equation}\label{3.9'}
\xymatrix{
E^*\ar[d]^{\Psi}\ar[r]^{\mu}&V^*\ar[d]^{\widehat{\Psi}}\ar[r]&Q\ar[d]^\psi\\
\bar{E}^*\ar[r]^{\bar{\mu}}&\bar{V}^*\ar[r]&\bar{Q}}
\end{equation}
\item The vector fields $R_h\in {\mathfrak X}(V^*)$ and $R_{\bar{h}}\in {\mathfrak X}(\bar{V}^*)$ are $\widehat{\Psi}-$related, that is,
$$R_{\bar{h}}\circ \widehat{\Psi}=T\widehat{\Psi}\circ R_h.$$
\item

If $\alpha\in \Gamma(V^*)$ and $\bar{\alpha}\in\Gamma(V^*)$ are $(\widehat{\Psi},\psi)$-related then the vector fields $R_h^\alpha\in {\mathfrak X}(Q)$ and $R_{\bar{h}}^{\bar{\alpha}}\in {\mathfrak X}(\bar{Q})$ are $\psi$-related, that is,
$$R_{\bar{h}}^{\bar{\alpha}}\circ \psi=T\psi\circ R_h^\alpha.$$

\end{enumerate}
\end{proposition}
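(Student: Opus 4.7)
For part (1), the strategy is to define $\widehat{\Psi}$ by descending $\bar{\mu}\circ\Psi$ through $\mu$. Given $\alpha,\alpha'\in E^*_q$ with $\mu(\alpha)=\mu(\alpha')$, we have $\alpha-\alpha'=c\,\phi(q)$ for some $c\in\R$; fiberwise linearity of $\Psi$ together with the $(\Psi,\psi)$-relatedness $\Psi\circ\phi=\bar{\phi}\circ\psi$ gives $\Psi(\alpha)-\Psi(\alpha')=c\,\bar{\phi}(\psi(q))\in\ker\bar{\mu}_{\psi(q)}$, so $\bar{\mu}\circ\Psi$ descends unambiguously to a bundle map $\widehat{\Psi}:V^*\to\bar{V}^*$ over $\psi$ with $\widehat{\Psi}\circ\mu=\bar{\mu}\circ\Psi$. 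To check $\widehat{\Psi}$ is an almost Poisson morphism, lift test functions $G_1,G_2\in C^\infty(\bar{V}^*)$ to $E^*$: using that $\mu$ and $\bar{\mu}$ are almost Poisson morphisms (established in Section~\ref{Sec:HamDyn}) and that $\Psi$ is almost Poisson by hypothesis, one obtains
\[
\{G_1\circ\widehat{\Psi},G_2\circ\widehat{\Psi}\}_{V^*}\circ\mu
=\{G_1,G_2\}_{\bar{V}^*}\circ\widehat{\Psi}\circ\mu,
\]
and surjectivity of $\mu$ yields the result. Linearity of $\widehat{\Psi}$ is automatic from the linearity of $\Psi$ and $\bar{\mu}$.

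For part (2), I will use the defining relation $R_h(G)\circ\mu=\{G\circ\mu,F_h\}_{E^*}$ from \eqref{2.11'}. Taking any $\bar{G}\in C^\infty(\bar{V}^*)$ and composing the desired identity $R_{\bar h}(\bar G)\circ\widehat{\Psi}=R_h(\bar G\circ\widehat{\Psi})$ with $\mu$, both sides equal $\{\bar{G}\circ\widehat{\Psi}\circ\mu,F_h\}_{E^*}$: the left side by applying the defining relation for $R_{\bar h}$, the commutativity of \eqref{3.9'}, the almost Poisson morphism property of $\Psi$, and the hypothesis $F_{\bar h}\circ\Psi=F_h$; the right side directly from the defining relation for $R_h$. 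Surjectivity of $\mu$ then gives $R_{\bar h}\circ\widehat{\Psi}=T\widehat{\Psi}\circ R_h$.

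Part (3) is then a diagram chase. From part (2) and the $(\widehat{\Psi},\psi)$-relatedness $\widehat{\Psi}\circ\alpha=\bar{\alpha}\circ\psi$,
\[
T\widehat{\Psi}\circ R_h\circ\alpha=R_{\bar h}\circ\widehat{\Psi}\circ\alpha=R_{\bar h}\circ\bar{\alpha}\circ\psi.
\]
Apply $T\tau_{\bar V^*}$ to both sides and use $\tau_{\bar V^*}\circ\widehat{\Psi}=\psi\circ\tau_{V^*}$ (from the commutative square in \eqref{3.9'}), which gives $T\tau_{\bar V^*}\circ T\widehat{\Psi}=T\psi\circ T\tau_{V^*}$. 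Combined with the definition \eqref{Rha} of $R_h^\alpha$ and $R_{\bar h}^{\bar\alpha}$, this yields $T\psi\circ R_h^\alpha=R_{\bar h}^{\bar\alpha}\circ\psi$.

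The only delicate step is the well-definedness and the almost Poisson property of $\widehat{\Psi}$ in part~(1); once $\widehat{\Psi}$ is constructed as an almost Poisson morphism, parts (2) and (3) reduce to routine diagram chasing using the defining formulas for the hamiltonian vector fields and the section-valued vector fields $R_h^\alpha$. No deeper obstruction is expected, since the three hypotheses of a hamiltonian morphism are precisely tailored to make the descent and the pull-back of the hamiltonian data compatible.
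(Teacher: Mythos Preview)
Your proof is correct and follows essentially the same strategy as the paper's own proof: descend $\bar\mu\circ\Psi$ through $\mu$ using $\Psi\circ\phi=\bar\phi\circ\psi$ for part~(i), then use the almost Poisson property of $\Psi$ together with $F_{\bar h}\circ\Psi=F_h$ for part~(ii), and finish with a diagram chase for part~(iii). The only cosmetic difference is in part~(ii): the paper first establishes $\Psi$-relatedness of the Hamiltonian vector fields ${\mathcal H}_{F_h}^{\Pi_{E^*}}$ and ${\mathcal H}_{F_{\bar h}}^{\Pi_{\bar E^*}}$ on $E^*$ and $\bar E^*$ and then pushes down via \eqref{Hamilton}, whereas you work directly with the bracket characterization \eqref{2.11'} of $R_h$ on functions and invoke surjectivity of $\mu$; these are equivalent unpackings of the same computation.
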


\begin{proof}
$(i)$ Using that $(\Psi,\psi)$ is a vector bundle morphism and the
fact that $\Psi\circ \phi=\bar{\phi}\circ \psi,$ it follows that
there exists a vector bundle morphism $\widehat{\Psi}:V^*\to
\bar{V}^*$ (over $\psi$) such that the diagram (\ref{3.9'}) is
commutative.  Moreover, since $\Psi,\mu$ and $\bar{\mu}$ are linear
almost Poisson morphisms, we deduce that $\widehat{\Psi}$ also is a
linear almost Poisson morphism.

 \medskip

 $(ii)$ The condition $F_{\bar{h}}\circ \Psi=F_h$ implies that
 $${\mathcal H}_{F_{\bar{h}}}^{\Pi_{\bar{E}^*}}\circ \Psi =T\Psi \circ {\mathcal H}_{F_h}^{\Pi_{E^*}}$$
 (note that $\Psi$ is an almost Poisson morphism). Thus, from $(i)$ and $(2.12)$, we have that
 $$R_{\bar{h}}\circ \widehat{\Psi}=T\widehat{\Psi}\circ R_h.$$

 \medskip

 $(iii)$ Using $(i)$, (\ref{Rha}) and the fact that $\widehat{\Psi}\circ \alpha=\bar{\alpha}\circ \psi,$ we conclude
 that the vector fields $R_h^\alpha$ and $R_{\bar{h}}^{\bar{\alpha}}$ are $\psi$-related.
 \end{proof}

 From Proposition \ref{morphism}, we deduce that following result

\begin{theorem} \label{theo:morphism}
Let $(E,\lcf\cdot,\cdot\rcf_E,\rho,\phi,h)$ and
$(\bar{E},\lcf\cdot,\cdot\rcf_{\bar{E}},\bar{\rho},\bar{\phi},\bar{h})$
be hamiltonian systems on $E$ and $\bar{E}$, respectively, and
$(\Psi,\psi)$ be a hamiltonian morphism between $E^*$ and
$\bar{E}^*.$ If  $\alpha\in \Gamma(V^*)$ satisfies the Hamilton-Jacobi equation for $h$,  $\bar{\alpha}\in
\Gamma(\bar{V}^*)$ is $(\Psi,\psi)$-related with $\alpha$ and $\psi$ is a surjective map then
$\bar{\alpha}$ satisfies the
Hamilton-Jacobi equation for $\bar{h}$.

\end{theorem}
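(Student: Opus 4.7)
The plan is to reduce Theorem \ref{theo:morphism} to the integral-curve characterization of the Hamilton--Jacobi equation given in Theorem \ref{main}, using the relatedness properties collected in Proposition \ref{morphism}. Concretely, I would avoid manipulating the intrinsic equation $\mu\circ i_{\zeta_h^\alpha}d^E(h\circ\alpha)=0$ directly under $(\Psi,\psi)$ (which would require transporting the section $\zeta_h^\alpha$ and the almost differential $d^E$ carefully) and instead run the argument on the dynamical side, where Proposition \ref{morphism} already gives us all the needed relatedness.

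First, I would fix $\bar c:I\to\bar Q$ an integral curve of $R_{\bar h}^{\bar\alpha}$ issuing from an arbitrary point $\bar q_0\in\bar Q$; by Theorem \ref{main} applied to $\bar h$, proving that $\bar\alpha\circ\bar c$ is an integral curve of $R_{\bar h}$ suffices to conclude that $\bar\alpha$ satisfies the Hamilton--Jacobi equation for $\bar h$. Using that $\psi$ is surjective, I would pick $q_0\in\psi^{-1}(\bar q_0)$ and let $c:I\to Q$ (shrinking $I$ if necessary) be the integral curve of $R_h^\alpha$ with $c(0)=q_0$. By Proposition \ref{morphism}$(iii)$, $R_h^\alpha$ and $R_{\bar h}^{\bar\alpha}$ are $\psi$-related, so $\psi\circ c$ is an integral curve of $R_{\bar h}^{\bar\alpha}$ through $\bar q_0$; by uniqueness of integral curves, $\psi\circ c=\bar c$.

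Next, since $\alpha$ solves the Hamilton--Jacobi equation for $h$, Theorem \ref{main} gives that $\alpha\circ c$ is an integral curve of $R_h$. Applying the relation $\widehat\Psi\circ\alpha=\bar\alpha\circ\psi$ I obtain
\[
\widehat\Psi\circ\alpha\circ c=\bar\alpha\circ\psi\circ c=\bar\alpha\circ\bar c.
\]
By Proposition \ref{morphism}$(ii)$, $R_h$ and $R_{\bar h}$ are $\widehat\Psi$-related, so $\widehat\Psi$ carries integral curves of $R_h$ to integral curves of $R_{\bar h}$; hence $\bar\alpha\circ\bar c$ is an integral curve of $R_{\bar h}$, which is what was needed.

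I expect no serious obstacle: the work has been absorbed into Proposition \ref{morphism}, and the only substantive use of the hypotheses in Theorem \ref{theo:morphism} itself is the surjectivity of $\psi$, invoked precisely to lift an arbitrary integral curve of $R_{\bar h}^{\bar\alpha}$ from $\bar Q$ to one of $R_h^\alpha$ on $Q$. The only point to be careful about is that integral curves on $Q$ may have smaller domain than their images on $\bar Q$, but this is harmless because the Hamilton--Jacobi characterization in Theorem \ref{main} is a local statement along curves, so it is enough to verify the integral-curve property on arbitrarily short intervals around each initial point $\bar q_0$.
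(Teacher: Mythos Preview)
Your argument is correct and is precisely the approach the paper intends: the paper does not spell out a proof but simply states that the theorem is deduced from Proposition~\ref{morphism}, and your integral-curve argument via Theorem~\ref{main} supplies the missing details. If you want to avoid the (harmless) domain technicality altogether, you can equivalently work pointwise with the relation $T\alpha\circ R_h^\alpha = R_h\circ\alpha$ (equation~(\ref{relation}) in the proof of Theorem~\ref{main}) and push it forward through $\widehat\Psi$ and $\psi$ using Proposition~\ref{morphism}(ii),(iii) and the surjectivity of $\psi$.
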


\begin{remark}
{\rm Let $(E,\lcf\cdot,\cdot\rcf_E,\rho,\phi,h)$ and
$(\bar{E},\lcf\cdot,\cdot\rcf_{\bar{E}},\bar{\rho},\bar{\phi},\bar{h})$
be hamiltonian systems on $E$ and $\bar{E}$, respectively, and
$(\Psi,\psi)$ be a hamiltonian morphism between $E^*$ and
$\bar{E}^*.$ Suppose that $\bar\alpha\in \Gamma(E^*)$ is a $1$-cocycle of $\bar{E}^*$ such that
$F_{\bar{h}}\circ \bar{\alpha}= constant.$ Then, if $\alpha\in\Gamma(E^*)$ is a $1$-cocycle of $E^*$ which is
$(\Psi,\psi)$-related with $\bar{\alpha}$, it is clear that $F_h\circ \alpha=constant$  and, therefore, $\alpha$
is a solution of the Hamilton-Jacobi equation for $h.$}
\end{remark}


\section{Examples}
In this section we will apply our theory to two type of mechanical
systems: uncontrained mechanical systems with  a dissipative
character (with linear external forces or time-dependent systems)
and nonholonomic mechanical systems subjected to  affine
constraints. In the last part of the section we will discuss the
case of a nonholonomic mechanical system with external linear
forces.

\subsection{Uncontrained mechanical systems with a dissipative character}

\subsubsection{Mechanical systems on Lie
algebroids with  linear  external forces}\label{Ex1:freeSkew}

(See Example  \ref{ejemplo}). Let us consider a  Lie  algebroid
structure (or more generally a skew-symmetric algebroid)
$(\lcf\cdot,\cdot\rcf,\rho)$ on a vector bundle $\tau_{\bar{E}}:
{\bar{E}}\to Q$ and a homomorphism of vector bundles $F:{\bar{E}}\to
{\bar{E}}$ (over the identity of $Q$). With these ingredients, it is
induced on the vector bundle $\tau_{\mathbb{R}\times
{\bar{E}}}:\mathbb{R}\times \bar{E}\to Q$, a skew-symmetric
algebroid structure $(E:= \mathbb{R}\times \bar{E},
\lcf\cdot,\cdot\rcf_{\mathbb{R}\times
\bar{E}},\rho_{\mathbb{R}\times \bar{E}})$ such that $(1,0)\in
\Gamma(\mathbb{R}\times \bar{E}^*)\cong C^\infty(M)\times
\Gamma(\bar{E}^*)$ is a $1$-cocycle.

Let $H:\bar{E}^*\to {\mathbb R}$ be a differentiable function ({\it Hamiltonian function}) on $\bar{E}^*.$ Denote by  $h: \bar{E}^*\to {\mathbb R}\times \bar{E}^*$ the induced section of $\mu=pr_2:{\mathbb R}\times \bar{E}^*\to \bar{E}^*$ by $H$, i.e.,
$$ h(\beta_q)=(-H(\beta_q),\beta_q),\;\;\;\;\mbox{ for all $q\in Q$ and $\beta_q\in \bar{E}_q^*.$}$$

The vector field $R_h$ on $\bar{E}^*$ is just ${\mathcal H}_H^{\Pi_{\bar{E}^*}}-(F^*)^\vee$ (see (\ref{1.14})).  Moreover,
$$R_{h}^\alpha=T\tau_{\bar{E}^*}\circ {\mathcal H}_H^{\Pi_{\bar{E}^*}} \circ \alpha.$$

On the other hand, for each $\alpha\in \Gamma(\bar{E}^*)$ one may define a section, $\zeta_{H}^{\alpha},$ of $ \bar{E}$ as follows
$$\beta(\zeta_{H}^{\alpha})=\beta^\vee(H)\circ \alpha,$$
for  $\beta\in \Gamma(\bar{E}^*)$. Then, under the identification $\Gamma({\mathbb R}\times \bar{E})\cong C^\infty(Q)\times \Gamma(\bar{E})$, $\zeta_{h}^{\alpha}$ is just $(1,\zeta_{H}^{\alpha}).$
Thus, using Corollary \ref{c1} we conclude the following result

\begin{corollary}\label{c3}
Let $(\bar{E},\lcf\cdot,\cdot\rcf,\rho)$ be a Lie algebroid (or more
generally a skew-symmetric algebroid) with Hamiltonian function
$H:\bar{E}^*\to {\mathbb R}$, and  $F:{\bar{E}}\to {\bar{E}}$ be a
vector bundle homomorphism. If $\alpha\in \Gamma(\bar{E}^*)$, the
following statements are equivalent:
\begin{enumerate}
\item[$(i)$] If $c:I\to Q$ is an integral curve of $T\tau_{\bar{E}^*}\circ {\mathcal H}_H^{\Pi_{\bar{E}^*}}\circ \alpha\in
{\mathfrak X}(Q)$ then $ \alpha\circ c:I\to E^*$ is an integral
curve of ${\mathcal H}_H^{\Pi_{\bar{E}^*}}-(F^*)^\vee.$
\item[$(ii)$] $\alpha\in \Gamma(\bar{E}^*)$ satisfies the {\bf
Hamilton-Jacobi equation}, i.e.,
$$ i_{\zeta_{H}^{\alpha}}d^{\bar{E}}\alpha + d^{\bar{E}}(H\circ \alpha) + F^*(\alpha)=0.$$
 \end{enumerate}
\end{corollary}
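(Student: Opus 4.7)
The plan is to exhibit Corollary \ref{c3} as the specialisation of Corollary \ref{c1} to the hamiltonian system $(E,\lcf\cdot,\cdot\rcf_{\mathbb{R}\times\bar{E}},\rho_{\mathbb{R}\times\bar{E}},\phi,h)$ of Example \ref{ejemplo}, in which $E=\mathbb{R}\times\bar{E}$, $\phi=(1,0)$, and the section $h(\beta_q)=(-H(\beta_q),\beta_q)$ of the trivial AV-bundle $\mu=\mathrm{pr}_2\colon\mathbb{R}\times\bar{E}^*\to\bar{E}^*$ corresponds to the function $F_h=p_0+H$ via (\ref{1.11'}). The natural lift to feed into Corollary \ref{c1} is $\beta:=(0,\alpha)\in\Gamma(E^*)\cong C^\infty(Q)\oplus\Gamma(\bar{E}^*)$, so that $\mu\circ\beta=\alpha$.

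For the dynamical part, I would combine (\ref{1.14}), giving $R_h=\mathcal{H}_H^{\Pi_{\bar{E}^*}}-(F^*)^\vee$, with the observation that $(F^*)^\vee$ is $\tau_{\bar{E}^*}$-vertical, hence
\[
R_h^{\mu\circ\beta} \;=\; T\tau_{\bar{E}^*}\circ R_h\circ\alpha \;=\; T\tau_{\bar{E}^*}\circ\mathcal{H}_H^{\Pi_{\bar{E}^*}}\circ\alpha,
\]
so condition (i) of Corollary \ref{c1} applied to $\beta$ reads exactly as (i) of the statement. For the algebraic part I would compute $\mu\circ i_{\zeta_h^{\mu\circ\beta}}d^E\beta+d^V(F_h\circ\beta)$ piece by piece. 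Formula (\ref{dER}) applied to $(0,\alpha)$ yields $d^E\beta=(F^*\alpha,d^{\bar{E}}\alpha)$ under the identification $\Gamma(\wedge^2E^*)\cong\Gamma(\bar{E}^*)\oplus\Gamma(\wedge^2\bar{E}^*)$ in which $(\beta',\eta')$ stands for $\phi\wedge\beta'+\eta'$; as noted right after (\ref{1.14}), $\zeta_h^\alpha=(1,\zeta_H^\alpha)\in C^\infty(Q)\oplus\Gamma(\bar{E})$; and $F_h\circ\beta=H\circ\alpha$ gives $d^V(F_h\circ\beta)=d^{\bar{E}}(H\circ\alpha)$. A short contraction using $\phi(e_0)=1$ and $\phi|_{\bar{E}}=0$ then produces
\[
i_{\zeta_h^\alpha}\,d^E\beta \;=\; \bigl(-F^*\alpha(\zeta_H^\alpha),\;F^*\alpha + i_{\zeta_H^\alpha}\,d^{\bar{E}}\alpha\bigr),
\]
whose image under $\mu$ (which simply drops the $C^\infty(Q)$-factor of a section of $E^*$) is $F^*\alpha+i_{\zeta_H^\alpha}d^{\bar{E}}\alpha$. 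Adding $d^{\bar{E}}(H\circ\alpha)$ gives exactly the Hamilton-Jacobi equation displayed in (ii).

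The only non-routine step is the bookkeeping in the contraction, which is manageable once one observes that a form of type $\phi\wedge\beta'+\eta'$ has its $\phi$-component absorbed by $\mu$, leaving only the $\bar{E}^*$-piece; I do not anticipate any conceptual obstacle. Alternatively, the same conclusion could be reached by applying Theorem \ref{main} directly to $\alpha\in\Gamma(V^*)$, using (\ref{h}) to pass between $h\circ\alpha$ and $\beta$, but routing through Corollary \ref{c1} keeps the algebra slightly cleaner.
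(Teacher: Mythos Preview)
Your proposal is correct and takes essentially the same approach as the paper: the paper simply states ``using Corollary \ref{c1} we conclude the following result'' after recording the identifications $R_h=\mathcal{H}_H^{\Pi_{\bar{E}^*}}-(F^*)^\vee$, $R_h^\alpha=T\tau_{\bar{E}^*}\circ\mathcal{H}_H^{\Pi_{\bar{E}^*}}\circ\alpha$, and $\zeta_h^\alpha=(1,\zeta_H^\alpha)$, leaving the contraction computation that you carry out explicitly to the reader. Your choice of lift $\beta=(0,\alpha)$ and the bookkeeping $d^E\beta=(F^*\alpha,d^{\bar{E}}\alpha)$, $F_h\circ\beta=H\circ\alpha$ are exactly what is needed to close the argument.
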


\begin{remark}{\rm
\begin{enumerate}
\item
When $\alpha$ is a $1$-cocycle  and $F\equiv 0$, we recover the result of \cite{LeMaMa}. Applications of this result to
nonholonomic mechanical systems subjected to linear constraints were
discussed there. Note that in this case the dissipative term is zero.
\item In the particular case when $\bar{E}$ is the standard Lie algebroid
$\tau_{TQ}:TQ\to Q$, then, using well-known results (see, for
instance, \cite{LR}), we deduce that there exists a one-to-one
correspondence between the vector bundle morphisms $F:TQ\to T^*Q$
(over the identity of $Q$) and the semi-basic $1$-forms on $TQ$
which are homogeneous of degree $1$. A semi-basic $1$-form
$\beta:TQ\to T^*(TQ)$ on $TQ$ is said to be homogeneous of degree
$1$ if ${\mathcal L}_\Delta\beta=\beta$, where $\Delta$ is the
Liouville vector field on $TQ$. Indeed, if
\[
F(q^i,\dot{q}^i)=(q^i,F_j^i(q)\dot{q}^j)
\]
then the corresponding $1$-form $\beta$ on $TQ$ is given by
\[
\beta=(F_j^i(q)\dot{q}^j)dq^i.
\]
Using this result and Corollary \ref{c3}, we deduce Theorem 3.4
in \cite{ILM} for the particular case when the semi-basic $1$-form
$\beta$ on $TQ$ is homogeneous of degree $1$.
\end{enumerate}}
\end{remark}

\begin{example}\label{Classical}{\emph{Standard mechanical systems. }}
{\rm
 Let $\bar{E}$ be  the standard Lie algebroid $\tau_{TQ}:TQ\to Q$.  In this case the differential $d^{\bar{E}}$ is the standard differential, $d$,  on $Q.$
 Suppose that $F\equiv 0$ and that $H:T^*Q\to \R$ is a hamiltonian function.
 If $\alpha$ is a $1$-form on $Q$ then the Hamilton-Jacobi equation is
 \begin{equation}\label{HJ-est}
 i_{X_{H}^\alpha}d\alpha + d (H\circ \alpha)=0,
 \end{equation}
 where $X_H^\alpha$ is the vector field on $Q$ defined by $X_H^\alpha(\beta)=\beta^\vee(H)\circ \alpha$, for all $\beta\in\Omega^1(Q).$

 If $Q$ is connected and $S:Q\to \R$ is a function on $Q$, using the $1$-form $\alpha=dS$, we obtain the standard Hamilton-Jacobi equation on $Q$, i.e.,
 $$H\circ dS=constant.$$

  On the other hand,  let ${\mathcal G}$ be a
riemannian metric  on a n-dimensional manifold $Q$, i.e, a
positive-definite  symmetric $(0,2)$-tensor on $Q$. The metric
${\mathcal G}$ induces the musical isomorphisms
\begin{eqnarray*}
&&\flat_{\mathcal G}: {\mathfrak X}(Q)\longrightarrow \Omega^1(Q), \qquad \flat_{\mathcal G}(X)(Y)={\mathcal G}(X,Y),\\
&&\sharp_{\mathcal G}:\Omega^1(Q)\longrightarrow{\mathfrak X}(Q),\qquad\sharp_{\mathcal G}(\alpha)=\flat_{\mathcal G}^{-1}(\alpha)
\end{eqnarray*}
where $X, Y\in {\mathfrak X}(Q)$ and $\alpha\in \Omega^1(Q)$.

Associated with the metric ${\mathcal G}$ there is an affine
connection $\nabla^{\mathcal G}$, the \emph{Levi-Civita
connection}, determined by:
$$
\begin{array}{l}
 [X,Y]= \nabla^{\mathcal G}_X Y-\nabla^{\mathcal G}_Y X \quad \hbox{   (symmetry)}\\
 X( {\mathcal G}(Y,Z))={\mathcal G}(\nabla^{\mathcal G}_X Y, Z)+{\mathcal G}(Y, \nabla^{\mathcal G}_X Z)\quad \hbox{   (metricity)}\; ,
 \end{array}
$$
for every $X, Y, Z\in {\mathfrak X}(Q)$.

Considering a vector field $X\in {\mathfrak X} (Q)$ and the associated 1-form $\alpha=\flat_{\mathcal G}(X),$
we will analyze the meaning of the Hamilton-Jacobi Equation (\ref{HJ-est}) for the Hamiltonian $H: T^*Q\to \R$ defined by $H(\eta_q)=\frac{1}{2}{\mathcal G}^* (\eta_q, \eta_q),$ where ${\mathcal G}^*$ is the induced metric on $T^*Q$ and $\eta_q\in T^*_q Q$.  First, we observe that the section $\zeta_H^\alpha$ of $TQ$ is just the vector field $X$. Then,  for $Y\in {\mathfrak X}(Q)$
\begin{eqnarray*}
d \alpha( X, Y)+Y(H\circ \alpha)
                    &=&d (\flat_{\mathcal G}(X))(X,Y)+\frac{1}{2}Y({\mathcal G}(X,X))\\
                    &=&  X ({\mathcal G}(X,Y))-\frac{1}{2}Y({\mathcal G}(X,X))-{\mathcal G}(X,[X, Y])\\
                    &=& {\mathcal G}(\nabla^{\mathcal G}_X X,Y).
\end{eqnarray*}

Therefore, the Hamilton-Jacobi equation (\ref{HJ-est}) for the case
of a Hamiltonian defined by a riemmanian metric is equivalent to the
condition for auto-parallelism of vector fields, that is, vector
fields $X\in {\mathfrak X}(Q)$ such that $\nabla^{\mathcal G}_X
X=0$.

Thus, if we have a vector field $X$ which satisfies the auto-parallelism condition, each integral curve $c:I\to Q$ (which is a geodesic) induces a solution of the Hamilton equations of the mechanical system, which is just
$$\flat_{\mathcal G}(X)\circ c:I\to T^*Q.$$
}
\end{example}


\begin{example}{\emph{The test particle under the gravitational interaction of two masses. }}
{\rm Consider the problem of the motion of a  particle moving under the
gravitational effect of two masses $m_1$ and $m_2$, which in turn
move in circular orbits about their common center of mass and are
not influenced by the motion of the particle  (classical planar
circular restricted three-body problem). Take a coordinate system
rotating about the common center of mass with the same frequency as
the two masses so that both of them lie on the $x$-axis with
coordinates $(-\mu_2, 0)$ and $(\mu_1, 0)$, where $\mu_i=
\frac{m_i}{m_1+m_2}$ (see \cite{KrMa,Murray}).

The system is described by the Lagrangian function:
\[
L(x,y,\dot{x}, \dot{y})=\frac{1}{2}(\dot{x}-y)^2+\frac{1}{2} (\dot{y}+x)^2-\frac{\mu_1}{r_1}-\frac{\mu_2}{r_2},
\]
where
$r^2_1=(x+\mu_2)^2+y^2$ and $r^2_2=(x-\mu_1)^2+y^2$.

The equations of motion adding a drag force $\tilde{F}=(\tilde{F}_1(x,y,\dot{x}, \dot{y}), \tilde{F}_2(x,y,\dot{x}, \dot{y}))$ are (see \cite{Murray}):
\begin{eqnarray*}
\ddot{x}-2\dot{y}-x&=&- \frac{\partial U}{\partial x}-\tilde{F}_1,\\
\ddot{y}+2\dot{x}-y&=&-\frac{\partial U}{\partial y}-\tilde{F}_2,
\end{eqnarray*}
where $U(x,y)=\frac{\mu_1}{r_1}+\frac{\mu_2}{r_2}$ and $\tilde{F}: T\R^2\to T^*\R^2$.

Now, we will describe this system in our geometric
framework. Consider the homomorphism $F: T\R^2\to T\R^2$ given by
\begin{eqnarray*}
F(\frac{\partial}{\partial x})&=&F^1_1(x,y)\frac{\partial}{\partial x}+F^2_1(x,y)\frac{\partial}{\partial y},\\
F(\frac{\partial}{\partial y})&=&F^1_2(x,y)\frac{\partial}{\partial x}+F^2_2(x,y)\frac{\partial}{\partial y},\
\end{eqnarray*}
where $F^a_b \in C^{\infty}(\R^2)$. Then, on the vector bundle
$\tau: \R \times T\R^2 \rightarrow \R^2$ it is induced a
(transitive) skew-symmetric algebroid structure described, in the local basis
$\{e_0=(1,0), e_1=(0, \frac{\partial}{\partial x}), e_2=(0,
\frac{\partial}{\partial y}) \} $, as follows
$$\begin{array}{c}
\lcf (1,0), (0,\displaystyle\frac{\partial}{\partial x})\rcf_{\R\times T\R^2}=(0, -F(\displaystyle\frac{\partial}{\partial x})),\;\;\;
\lcf (1,0), (0,\displaystyle\frac{\partial}{\partial y})\rcf_{\R\times T\R^2}=(0, -F\displaystyle(\frac{\partial}{\partial y})), \\[8pt]
\rho_{\R\times T\R^2}(1,0)=0,\;\;\; \rho_{\R\times T\R^2}(0,\displaystyle\frac{\partial}{\partial x})=\displaystyle\frac{\partial}{\partial x},\;\;\; \rho_{\R\times T\R^2}(0,\displaystyle\frac{\partial}{\partial
y})=\displaystyle\frac{\partial}{\partial y}.
\end{array}$$

Therefore, $\C_{01}^1=-F^1_1$, $\C_{01}^2=-F^2_1$,
$\C_{02}^1=-F^1_2$, $\C_{02}^2=-F^2_2$, $\rho^1_1=1$ and
$\rho^2_2=1$.

Note that the homomorphism $F$ generates a drag force $\tilde{F}$ of
the type \begin{eqnarray*}
\tilde{F}(x, y, \dot{x}, \dot{y})&=& \left(F^1_1(x,y)\frac{\partial L}{\partial
\dot{x}}+F^2_1(x,y)\frac{\partial L}{\partial \dot{y}},
F^1_2(x,y)\frac{\partial L}{\partial \dot{x}}+F^2_2(x,y)\frac{\partial L}{\partial \dot{y}} \right)\\
&=& \left(F^1_1(x,y)(\dot{x}-y)+F^2_1(x,y)(\dot{y}+x),
F^1_2(x,y)(\dot{x}-y)+F^2_2(x,y)(\dot{y}+x)\right).
\end{eqnarray*}

On the dual bundle $\R\times T^*\R^2$ we have a hamiltonian
function:
\begin{equation}
H(x, y, p_x, p_y)=\frac{1}{2}p_x^2+\frac{1}{2}p_y^2+yp_x-xp_y+U(x,y)
\label{eq:HamTParticle} \end{equation} and the corresponding
Hamilton's equations are now:
\begin{eqnarray*}
\dot{x}&=& p_x+y,\\
\dot{y}&=&p_y-x,\\
\dot{p}_x&=&p_y-\frac{\partial U}{\partial x}-F_1^1(x,y)p_x-F_1^2(x,y)p_y=-\frac{\partial H}{\partial x}-F_1^1(x,y)p_x-F_1^2(x,y)p_y,\\
\dot{p}_y&=&-p_x-\frac{\partial U}{\partial y}-F_2^1(x,y)p_x-F_2^2(x,y)p_y=-\frac{\partial H}{\partial y}-F_2^1(x,y)p_x-F_2^2(x,y)p_y.
\end{eqnarray*}

Consider a section $\alpha\in \Gamma(T^*\R^2)$ where $\alpha=\alpha_1\, dx+\alpha_2\, dy$. Then,
$$\zeta_{H}^\alpha=(\frac{\partial H}{\partial p_x} \circ \alpha)\frac{\partial }{\partial x}+
(\frac{\partial H}{\partial p_y} \circ \alpha)\frac{\partial }{\partial y}$$

Thus, Hamilton-Jacobi equation is equivalent to
\begin{eqnarray*}
\frac{\partial H}{\partial x}\circ \alpha+(\frac{\partial H}{\partial p_x}\circ \alpha)
\frac{\partial \alpha_1}{\partial x}+(\frac{\partial H}{\partial p_y}\circ \alpha)\frac{\partial \alpha_1}{\partial y}+\alpha_i F^i_1 &=&0\\
\frac{\partial H}{\partial y}\circ \alpha+(\frac{\partial H}{\partial
p_x}\circ \alpha)\frac{\partial \alpha_2}{\partial x}+(\frac{\partial H}{\partial
p_y}\circ \alpha)\frac{\partial \alpha_2}{\partial y} + \alpha_i F^i_2&=&0\, .
\end{eqnarray*}

For a hamiltonian function given by (\ref{eq:HamTParticle}), the
last two equations can be written as:
\begin{eqnarray*}
\frac{\partial U}{\partial x}-\alpha_2+(\alpha_1+y)\frac{\partial
\alpha_1}{\partial x}+
(\alpha_2-x)\frac{\partial \alpha_1}{\partial y} + \alpha_i F^i_1&=&0\\
\frac{\partial U}{\partial y}+\alpha_1+(\alpha_1+y)\frac{\partial
\alpha_2}{\partial x}+ (\alpha_2-x)\frac{\partial \alpha_2}{\partial
y} + \alpha_i F^i_2&=&0\, .
\end{eqnarray*}

An interesting case \cite{Murray} is when the drag force is
$$\tilde F(x,y,\dot x,\dot y) = \left(k(x,y)(\dot{x}-y),
k(x,y)(\dot{y}+x)\right),$$ with $k\in C^{\infty}(\R^2)$. In this
case, the homomorphism is $F(X)=k(x,y)X$ with $X\in T_{(x,y)}\R^2$.
Thus, the equations of motion are
\begin{eqnarray*}
\dot{x}&=& p_x+y,\\
\dot{y}&=&p_y-x,\\
\dot{p}_x&=&p_y-\frac{\partial U}{\partial x}-k(x,y)p_x,\\
\dot{p}_y&=&-p_x-\frac{\partial U}{\partial y}-k(x,y)p_y.
\end{eqnarray*}

In this particular case the linear almost Poisson tensor on $\R\times T^*\R^2$ is given by
$$
\Pi_{\R\times T^*\R^2}=\frac{\partial }{\partial x}\wedge
\frac{\partial}{\partial p_x}+ \frac{\partial }{\partial y}\wedge
\frac{\partial}{\partial p_y}
+k(x,y)p_x\frac{\partial }{\partial p_0}\wedge \frac{\partial }{\partial p_x}\\
+ k(x,y)p_y\frac{\partial }{\partial p_0}\wedge \frac{\partial
}{\partial p_y}
$$
where $(p_0, x, y, p_x, p_y)$ are standard coordinates on $\R\times
T^*\R^2$.

Now, if  the function $k$ is constant and we choose a section
$\alpha = dS$ where $S: \R^2\to \R$ is an
arbitrary function,
 the Hamilton-Jacobi equation is $$ d(H\circ \alpha) + k dS =0,
$$ which is equivalent to the suggestive equation:
 \[
 kS+H\circ dS=\hbox{constant}
 \]
 or, in other words,
 \[
 kS(x,y)+H(x,y, \frac{\partial S}{\partial x}, \frac{\partial S}{\partial
y})=\hbox{constant}.
 \]
In particular, for the hamiltonian function given by
(\ref{eq:HamTParticle}), we obtain the following partial
differential equation:
 \[
 kS(x,y)+ \frac{1}{2}\left(\frac{\partial S}{\partial x}\right)^2 +
 \frac{1}{2}\left(\frac{\partial S}{\partial y}\right)^2+y\frac{\partial S}{\partial x}-x\frac{\partial S}{\partial y}+U(x,y)=\hbox{constant}.
\]

Note that, this equation is the Hamilton-Jacobi equation as stated
in Corollary \ref{c2} for the cocycle $(kS,dS) \in C^\infty(\R^2)
\times \Omega^1(\R^2) \simeq \Gamma((\R \times T\R^2)^*)$ when we
consider the skew-symmetric algebroid $\tau:\R \times T\R^2
\rightarrow \R^2$

}

\end{example}

\begin{example}{\emph{Hamilton-Jacobi equation for a particle on a vertical cylinder in a uniform gravitational field with friction. }}
{\rm As another example we consider a particle of mass $m$
constrained to move on  a cylinder of radius $r$ in a uniform
gravitational field of strength $g$ and assume also the existence of
a frictional force acting on the system.

The Hamiltonian $H: T^*(\R\times S^1)\to \R$ is:
\[
H(x,\theta,p_x, p_\theta)=
\frac{p_{x}^2}{2m}+\frac{p_{\theta}^2}{2mr^2}+mgx\; .
\]
The frictional force is modeled in our setting by   the homomorphism
$F: T(\R\times S^1)\to T(\R\times S^1)$
given by
\begin{eqnarray*}
F(\frac{\partial}{\partial x})&=&K_1\frac{\partial}{\partial x},\\
F(\frac{\partial}{\partial \theta})&=&K_2\frac{\partial}{\partial
\theta}\; ,
\end{eqnarray*}
with $(K_1, K_2)\in\R^2$.

The corresponding equations of motion are
\begin{eqnarray*}
m\dot{x}&=&p_x\\
mr^2\dot{\theta}&=& p_\theta\\
\dot{p}_x&=&-mg-K_1 p_x\\
\dot{p}_\theta&=&-K_2 p_\theta\; .
\end{eqnarray*}

In this case, we may consider the skew-symmetric algebroid
$\tau:\R\times T(\R\times S^1) \to \R \times S^1$ associated with
the above homomorphism $F: T(\R\times S^1)\to T(\R\times S^1)$
defined as in (\ref{1Ex}). For this skew-symmetric algebroid we have
that $\phi=(1,0) \in \Gamma(\R\times T^* (\R\times
S^1))=C^\infty(\R\times S^1)\times \Omega^1(\R\times S^1)$ is a
1-cocycle and $V=\widehat{\phi}^{-1}(0)=T(\R\times S^1)$.

Let us consider a $1$-form $\alpha \in \Gamma(T^*(\R\times S^1))$.
If locally  $\alpha$ is given by $$\alpha = \alpha_x dx +
\alpha_\theta d\theta$$ with $\alpha_x, \alpha_\theta \in
C^\infty(\R \times S^1)$, then the local expression of the vector
field $\zeta_H^\alpha$ on $\R\times S^1$ is
\[
\zeta_H^\alpha=\frac{\alpha_x}{m}\frac{\partial }{\partial x} + \frac{\alpha_\theta}{m}\frac{\partial }{\partial \theta}.
\]
Moreover, the Hamilton-Jacobi equation for $\alpha\in \Omega^1(\R\times S^1)$ is
\[
i_{\zeta_H^\alpha} d\alpha + d(H\circ \alpha) + F^*\alpha=0,\]
where $d$ is the standard differential and $F^*\alpha$ is the pullback of $\alpha$ by $F$ (see Example \ref{ejemplo}).
In local coordinates this equation becomes
\begin{eqnarray*}
\frac{\alpha_x}{m} \frac{\partial \alpha_x}{\partial x} + \frac{\alpha_\theta}{m r^2}\frac{\partial \alpha_x}{\partial \theta} + m g + K_1 \alpha_x &=&0,\\
\frac{\alpha_x}{m}\frac{\partial \alpha_\theta}{\partial x} + \frac{\alpha_\theta}{m r^2}\frac{\partial \alpha_\theta}{\partial \theta}  + K_2 \alpha_\theta &=&0.
\end{eqnarray*}

In the particular case when $\alpha=dS$ with $S$  a function given by $S(x, \theta)=S^{(1)}(x)+S^{(2)}(\theta)$, we have that the corresponding Hamilton-Jacobi equation is
\begin{eqnarray*}
K_1\frac{dS^{(1)}}{d x}+mg+\frac{1}{m}\frac{d S^{(1)}}{d x}\frac{d^2 S^{(1)}}{d x^2}&=&0,\\
K_2\frac{dS^{(2)}}{d \theta}+\frac{1}{mr^2}\frac{d
S^{(2)}}{d \theta}\frac{d^2 S^{(2)}}{d \theta^2}&=&0\; .
\end{eqnarray*}

Solving the equation we obtain that
\begin{eqnarray*}
S^{(2)}(\theta) &=&0 \hbox{     or     }  S^{(2)}(\theta)=-\frac{K_2mr^2}{2} \theta^2+C_1\\
S^{(1)}(x)&=& \frac{-gm-gm \mathbf{W}\left[\displaystyle{
-\frac{e^{-1+\frac{K_1^2x}{g}-\frac{K_1^2
C_2}{gm}}}{gm}}\right]}{K_1} \hbox{   if $K_1\not=0$ or    }
\\
S^{(1)}(x)&=&\pm\sqrt{2}\sqrt{-gm^2 x+C_2} \hbox{   if $K_1=0$}\; ,
\end{eqnarray*}
where $\mathbf{W}$ is the Lambert W-function (the inverse function
of $f(v)=ve^v$) and $C_1$ and $C_2$ are constants.

\begin{figure}\label{Fig1}

 \includegraphics[height=5.5cm]{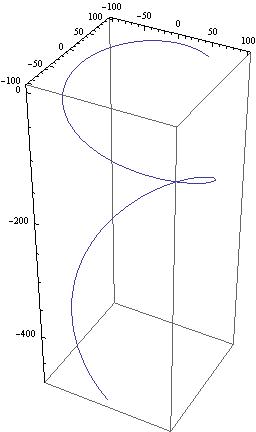}\qquad \qquad \includegraphics[height=5.5cm]{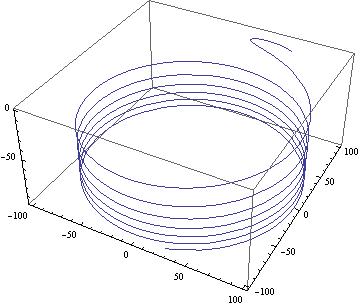}\\
 \caption{Comparison of a free trajectory (without friction), on the left,  and a trajectory with friction, on the right}
\end{figure}

In Figure $1$, we compare the trajectory of the particle for the
free problem and the trajectory for the problem with friction.}

\end{example}


\subsubsection{Unconstrained
mechanical systems on  Lie algebroids with a
1-cocycle}\label{section3.2}

Let $(\lcf\cdot,\cdot\rcf,\rho)$ be a Lie algebroid structure (or
more generally a skew-symmetric algebroid) on a vector bundle
$\tau_E:E\to Q$ and $\phi\in \Gamma(E^*)$ be a $1$-cocycle such that
$\phi(q)\not=0,$ for all $q\in Q.$ Denote by ${\mathcal A}$
(respectively, $V$) the affine (respectively, vector)  subbundle of
$E$  given by ${\mathcal A}=\widehat{\phi}^{-1}(1)$ (respectively,
$V=\widehat{\phi}^{-1}(0)$).

In addition, we endow the vector bundle with a bundle metric
${\mathcal G}:E\times_Q E\to \mathbb{R}$ on $E$.  Denote by
$\flat_{\mathcal G}:E\to E^*$ the isomorphism of vector bundles
induced by ${\mathcal G}.$  Consider the section ${\mathcal X}$ of
$E$ defined as follows
$${\mathcal X}=\flat_{\mathcal G}^{-1}\circ \phi. $$

We will suppose, without loss of generality, that ${\mathcal
G}({\mathcal X},{\mathcal X})=1.$ Thus, $\phi({\mathcal X})=1$ and ${\mathcal X}$ is a section of the affine bundle $\tau_{\mathcal A }:{\mathcal A}\to Q.$  On the other hand, ${\mathcal G}({\mathcal X}(q),v)=0,$ for all $v\in
  V_q$, therefore $E_q=<{\mathcal X}(q)> \oplus V_q,$ for all $q\in Q.$

Now, let us consider the hamiltonian section $h:V^*\to E^*$ of the
AV-bundle $\mu:E^*\to V^*$ characterized by
\[
h(\eta_q)(v_q)=\eta_q(v_q),\;\;\; h(\eta_q)({\mathcal
X}_q)=-H(\eta_q),\;\;\mbox{ for all } q\in Q,\;\; \eta_q\in V_q^*
\mbox{ and } v_q\in V_q,
\]
where $H:V^*\to \mathbb{R}$  is the function
$$
H(\eta_q)=\frac{1}{2}{\mathcal G}^*_V(\eta_q,\eta_q) + {\mathbb
V}(q),$$ with ${\mathcal G}^*_V:V^*\times V^*\to \mathbb{R}$ the
bundle metric induced by ${\mathcal G}$ on $V^*$ and ${\mathbb
V}:Q\to \mathbb{R}$ a real $C^\infty$-function on $Q$. Then, the
function $F_h:E^*\to \mathbb{R}$ associated with the section $h$ is
just $F_h=H\circ \mu +\widehat{\mathcal X}.$

Let $(q^i)$ be a system of local coordinates for $Q$ and consider an
orthonormal local basis $\{e_0,e_a\}$ of $\Gamma(E)$ with
$e_0={\mathcal X}$. Denote by $(q^i,p_0,p_a)$ the local coordinates
on $E^*$ with respect to the dual basis of $\{e_0,e_a\}$.

The local expression of the hamiltonian section  $h\in
\Gamma(\mu)$ is the following
\[
\begin{array}{rcl}
h(q^i,p_a)&=&(q^i,-H=\displaystyle\displaystyle-\frac{1}{2}(p_a)^2-
{\mathbb V}(q), p_a).
\end{array}
\]
The integral curves of the hamiltonian vector field $R_h\in
{\mathfrak X}(V^*)$ are the solutions of the Hamilton equations
\[
\begin{array}{rcl}
\displaystyle\frac{dq^i}{dt}&=& \rho_0^i + \rho_a^ip_a\\[8pt]
\displaystyle\frac{dp_b}{dt}&=&-\rho_b^i\displaystyle\frac{\partial
{\mathbb V} }{\partial q^i} + (\C_{0b}^c + \C_{ab}^cp_a)p_c.
\end{array}
\]

For this mechanical system the dissipative term has the local
expression
\[
\{H\circ \mu,F_h\}=\rho_0^i\frac{\partial {\mathbb V}}{\partial
q^i}+\C^c_{0b}p_cp_b.\]

Let $\alpha$ be a section of $V^*$. Then, the section $\zeta_h^\alpha$ of $E$ is given by
$$
\zeta_{h}^\alpha=i_V\circ \zeta_H^\alpha + {\mathcal X},$$
where $\zeta_H^\alpha$ es the section of $V$ defined by $\beta(\zeta_H^\alpha)=\beta^\vee(H)\circ \alpha,$ for all $\beta\in \Gamma(V^*)$.

Thus, using Theorem \ref{main} we deduce the following corollary

\begin{corollary}\label{c4.2}
Let $\alpha$ be a  section of $V^*$. Then, the following statements
are equivalent:
\begin{enumerate}
\item[$(i)$] If $c:I\to Q$ is an integral curve of $R_h^\alpha=
T\tau_{V^*}\circ R_h\circ \alpha \in {\mathfrak X}(Q)$, then
$\alpha \circ c:I\to V^*$ is an integral curve of $R_h\in
{\mathfrak X}(V^*).$
\item[$(ii)$] $\alpha\in \Gamma(V^*)$ satisfies the {
Hamilton-Jacobi equation}:
$$i_{\zeta_H^{\alpha}}d^V\alpha +\mu\circ i_{\mathcal X}d^E (h\circ \alpha)=0.$$
\end{enumerate}
\end{corollary}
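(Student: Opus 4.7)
The plan is to deduce this corollary directly from Theorem \ref{main} by using the decomposition $\zeta_h^\alpha = i_V\circ \zeta_H^\alpha + \mathcal{X}$ that has already been established in the paragraph preceding the statement. By Theorem \ref{main}, condition $(i)$ is equivalent to $\mu\circ i_{\zeta_h^\alpha}\, d^E(h\circ \alpha) = 0$, so the whole argument consists of rewriting this left-hand side as $i_{\zeta_H^\alpha}d^V\alpha + \mu\circ i_\mathcal{X}d^E(h\circ \alpha)$.

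First, I would use linearity of contraction in its vector argument to split
\[\mu\circ i_{\zeta_h^\alpha}\, d^E(h\circ\alpha) = \mu\circ i_{i_V\circ \zeta_H^\alpha}\, d^E(h\circ\alpha) + \mu\circ i_\mathcal{X}\, d^E(h\circ\alpha).\]
The second summand is already what appears in $(ii)$, so the real task is to identify the first summand with $i_{\zeta_H^\alpha}d^V\alpha$.

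For this identification, I would combine two facts already recorded in Section \ref{section1}: the compatibility of almost differentials with the projection $\mu$, namely $d^V(\mu\circ \beta) = \wedge^2\mu\circ d^E\beta$ for every $\beta\in \Gamma(E^*)$, together with the tautology $\alpha = \mu\circ h\circ \alpha$ coming from $h$ being a section of $\mu$. Applying the former to $\beta = h\circ \alpha$ yields $d^V\alpha = \wedge^2\mu\circ d^E(h\circ\alpha)$. Then, unwinding $\mu = i_V^*$ pointwise, so that $(\wedge^2\mu)(\omega)(X,Y) = \omega(i_V X, i_V Y)$ for $\omega\in \wedge^2 E^*_q$ and $X,Y\in V_q$, one checks for each $\sigma\in \Gamma(V)$ that
\[(i_{\zeta_H^\alpha}d^V\alpha)(\sigma) = d^E(h\circ\alpha)(i_V\circ\zeta_H^\alpha,\, i_V\circ\sigma) = \bigl(\mu\circ i_{i_V\circ\zeta_H^\alpha}\, d^E(h\circ\alpha)\bigr)(\sigma),\]
which gives the desired equality of the two expressions.

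The only step that needs a moment of care is this last naturality statement relating $\mu$-composition with contraction along $i_V$; once it is in place, substituting back and invoking Theorem \ref{main} immediately yields the equivalence of $(i)$ and $(ii)$.
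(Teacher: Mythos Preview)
Your proposal is correct and follows exactly the approach the paper has in mind: the paper simply states that the corollary follows from Theorem \ref{main} together with the decomposition $\zeta_h^\alpha = i_V\circ \zeta_H^\alpha + \mathcal{X}$ established just before, and your argument is a faithful unpacking of that implication, including the naturality identity $d^V\alpha = \wedge^2\mu\circ d^E(h\circ\alpha)$ needed to match the first summand with $i_{\zeta_H^\alpha}d^V\alpha$.
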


\begin{remark}{\rm If $\beta$ is a  section of $E^*$, the Hamilton-Jacobi equation for $\mu\circ \beta$ is (see Corollary \ref{c1})
$$i_{\zeta_H^{\mu\circ \beta}}d^V(\mu\circ \beta) +\mu\circ i_{\mathcal X}d^E (\beta) + d^V(H\circ \mu\circ\beta + \widehat{\mathcal X}\circ \beta)=0.$$

If $\beta$ is $1$-cocycle on $E$, from Corollary \ref{c2}, then the Hamilton-Jacobi equation is
\begin{equation}\label{Remark48}
d^V(H\circ \mu\circ \beta + \widehat{\mathcal X}\circ \beta)=0.
\end{equation}
Therefore, Corollary \ref{c4.2} is a generalization of the main result of \cite{MaSo}
(see Theorem 3 in \cite{MaSo}).  In such a paper the
authors obtain a Hamilton-Jacobi equation for mechanical systems on
Lie affgebroids with this extra hypothesis on $\beta$.

If, additionally, $V$ is a transitive Lie algebroid (that is, $\rho_V({V}_q)=T_qQ$, for all $q\in Q$) and $Q$ is connected,
we have that the Eq. (\ref{Remark48}) may be rewritten as follows
\[
H\circ\mu\circ \beta + \widehat{\mathcal X}\circ \beta=constant.
\]
}
\end{remark}
\begin{example}{\emph{Time dependent classical systems. }}
{\rm Let $\pi:Q\to \R$ be a fibration on a manifold $Q$ and $\eta$
the $1$-form on $Q$ given by $\eta=\pi^*(dt)$, where $t$ is the
standard coordinate on $\R$. Consider the standard Lie algebroid on
$TQ$. Then $\eta$ is a $1$-cocycle for it and the affine bundle
${\mathcal A}=\widehat{\eta}^{-1}(1)=\{v\in TQ/\eta(v)=1\}\to Q$ may
be identified with  the $1$-jet bundle $J^1\pi$ of local sections of
$\pi.$ Note that the associated vector bundle
$V=\widehat{\eta}^{-1}(0)$ is just the vertical bundle of $\pi$
$$V\pi=\{ v\in TQ/\eta(v)=0\}.$$

Now, we take $h:V^*\pi\to T^*Q$ a hamiltonian section of $\mu:T^*Q\to V^*\pi.$ If the local expression of $h$ is
\[
h(t,q^i,p_i)=(t,q^i,-H(t,q^i,p_i),p_i)
\]
then the associated hamiltonian vector field $R_h$ on $V^*\pi$ is given by
\[
R_h=\frac{\partial}{\partial  t} + \frac{\partial H}{\partial
p_i}\frac{\partial}{\partial  q^i}-\frac{\partial H}{\partial
q^i}\frac{\partial }{\partial  p_i}.\]

Thus, the Hamilton equations are just the time dependent classical
Hamilton equation for $h$
\[
\frac{dq^i}{dt}=\frac{\partial H}{\partial  p_i},\;\;\; \frac{dp_i}{dt}=-\frac{\partial H}{\partial  q^i}.
\]
Now, consider a section $\alpha$  of the vector bundle  $V^*\pi\to
Q$. Then, $\zeta_h^\alpha = \widetilde{dF_h}$ is a vector field on
$Q$ defined by
\[
\beta(\zeta_h^\alpha)=\beta^\vee(F_h)\circ h\circ \alpha,\;\;\;\mbox{ for } \beta\in \Omega^1(E).\]
The Hamilton-Jacobi equation is
\[
(i_{\zeta_h^\alpha} d(h\circ \alpha))_{|V\pi}=0.
\]
In the case when $\alpha$ is a closed $1$-form on $Q$ the Hamilton-Jacobi equation
may be rewritten as
\[
(d(F_h\circ \alpha))_{|V\pi}=d^{V\pi}(F_h\circ \alpha)=0,
\]
i.e., $F_h^\alpha=F_h\circ \alpha$ is constant on the fibers of $\pi$.

Finally, we analyze the case when $\pi$ is trivial, that is, $Q=\R\times P$ with $P$ a connected manifold and $\pi$ is the projection on the first factor. Then, $V\pi=\R\times TP$ and the section $h$ may be identified with a time dependent Hamiltonian function $H:\R\times T^*P\to \R.$ If $\alpha=dW$, with $W:Q\to \R$ a real function on $Q$, then
\[(F_h\circ \alpha)(t,q)=\frac{\partial W }{\partial t}_{|t} + H(t,dW_t(q))\]
with $(t,q)\in \R\times P.$ Here $W_t:P\to \R$ is the real function defined by $W_t(p)=W(t,p).$
In this case the local expression of the Hamilton-Jacobi equation is
\[
\frac{\partial W}{\partial t} + H(t,q^i,\frac{\partial W}{\partial q^i})=\mbox{ constant on $P$,}
\]
i.e., the time dependent classical Hamilton-Jacobi equation \cite{AbMa}.}
\end{example}

\subsection{Nonholonomic mechanical systems with   affine constraints}

Let  $(\lcf\cdot,\cdot\rcf,\rho)$ be a Lie algebroid structure on a
vector bundle $\tau_E:E\to Q$.

A {\it mechanical system subjected to affine nonholonomic constraints on $E$ } consists of
\begin{enumerate}
\item a vector subbundle $\tau_U:U\to Q$ of $E,$
\item a bundle metric ${\mathcal G}:E\times_Q E\to \R$ on $E,$
\item a function ${\mathbb V}:Q\to \R$ on $Q$
\item and a section $X_0\in \Gamma(E)$ such that ${\mathcal P}(X_0)=0,$ where ${\mathcal P}:E=U\oplus U^\perp \to U$ is the orthogonal projector defined by the metric ${\mathcal G}.$
\end{enumerate}

Then, one  may consider an affine bundle $\tau_{\mathcal U}:{\mathcal U}\to Q,$
\[
 q\in Q\longrightarrow {\mathcal U}_q=\{X_0(q) + u_q/u_q\in U_q\}
\]
whose associated vector bundle is just $U$, describes the affine
nonholonomic constraints. Denote $\mathcal{U}^+$ the affine dual
bundle associated to $\mathcal{U}$ whose fiber at $q \in Q$ consists
in the affine functions over $\mathcal{U}_q$. Moreover,
$\mathcal{U}^+$ has a distinguished section $\phi:Q\to {\mathcal
U}^+$ which is induced by the constant function $\phi_q=1$ on
${\mathcal U}_q$.

On the other hand, if we denote by  $\widetilde{\mathcal
U}=({\mathcal U}^+)^*$ the bidual bundle of ${\mathcal U}$, then
$\widetilde{\mathcal U}$ is a vector subbundle of $\R\times E\to Q$
with  fiber at  $q\in Q$
\[
\widetilde{\mathcal U}_q=\{(\lambda, \lambda X_0(q) + u_q)/\lambda\in {\R} \mbox{ and } u_q\in U_q\}.
\]
Thus, a section of $\widetilde{\mathcal U}$ may be identified with a
pair $(f,fX_0+\sigma),$ with $\sigma\in \Gamma(U)$ and $f\in
C^\infty(Q).$ Under these identifications, the distinguished section
$\phi$ is given by
\[
\phi(f,fX_0+\sigma)=f.
\]
Moreover, in a natural way, the projector ${\mathcal P}:E=U\oplus
U^\perp \to U$ defined by the metric ${\mathcal G}$  induces  a new
morphism $\widetilde{\mathcal P}:\R\times E\to \widetilde{\mathcal
U}$ of vector bundles given by
\[
\widetilde{{\mathcal P}}(\lambda,e_q)=(\lambda, \lambda X_0 + {\mathcal P}(e_q)),
\]
for all $\lambda\in \R$ and $e_q\in E_q.$

In what follows, we will  introduce a skew-symmetric algebroid
structure on $\widetilde{\mathcal U}$ such that $\phi$ is a
$1$-cocycle. In order to do this, we consider the Lie algebroid
structure $(\lcf\cdot,\cdot\rcf_{\R\times E},\rho_{\R\times E})$ on
$\R\times E$ induced by the Lie algebroid structure on $E$ and the
homomorphism $F\equiv0$ on $E$ (see Example \ref{ejemplo}). Now, we
consider  the bracket $\lcf\cdot,\cdot\rcf_{\widetilde{\mathcal U}}$
on the space of sections of ${\widetilde{\mathcal U}}$ and the
vector bundle morphism $\rho_{\widetilde{\mathcal
U}}:{\widetilde{\mathcal U}}\to TQ$ given by
\[
\lcf (f_1,f_1X_0+\sigma_1),(f_2,f_2X_0+\sigma_2)\rcf_{\widetilde{\mathcal U}}=\widetilde{\mathcal P}( \lcf (f_1,f_1X_0+\sigma_1),(f_2, f_2X_0+\sigma_2)\rcf_{\R\times E})
\]
\[\rho_{\widetilde{\mathcal U}}(f,fX_0+\sigma)=\rho_{\R\times E}(f,fX_0+\sigma)=\rho(fX_0+\sigma)
\]
for $\sigma,\sigma_1,\sigma_2\in \Gamma(U)$ and $f,f_1,f_2,f\in C^\infty(Q).$ Then, using that ${\mathcal P}:\R\times E\to \widetilde{U}$ is a projector, we deduce that the pair $(\lcf\cdot,\cdot\rcf_{\widetilde{\mathcal U}},\rho_{\widetilde{\mathcal U}})$ is a skew-symmetric algebroid structure on ${\widetilde{\mathcal U}}.$  With respect to this structure,  one may prove  that \[
d^{\widetilde{\mathcal U}}\phi=0.
\]

Note that the corresponding skew-symmetric algebroid structure on $\widehat{\phi}^{-1}(0)\cong U$ is just
\[
\lcf \sigma_1,\sigma_2\rcf_U={\mathcal P}(\lcf
\sigma_1,\sigma_2\rcf),\;\;\;\,
\rho_U(\sigma)=\rho(\sigma),\;\;\;\mbox{ with }\sigma_i,\sigma\in
\Gamma(U).
\]
Moreover, ${\mathcal P}:E\to U$ and $\widetilde{\mathcal P}: \R\times E\to {\widetilde{\mathcal U}}$ are skew-symmetric algebroid morphisms.

On the other hand, one may consider  the hamiltonian section $h:U^*\to {{\mathcal U}}^+$ defined by
\[
h(\eta_q)(\lambda, \lambda X_0(q)+u_q)=\eta_q(u_q)-\lambda H(\eta_q),\;\;\;\; \forall \eta_q\in U^*_q, \mbox{ and } (\lambda,\lambda X_0(q) + u_q)\in \widetilde{\mathcal U}_q,
\]
where $H:U^*\to \R$ is the real function
\[
H(\eta_q)=\frac{1}{2}{\mathcal G}_{U^*}(\eta_q,\eta_q) + {\mathbb V}(q).
\]
Here,  ${\mathcal G}_{U^*}:U^*\times U^*\to \R$ is  the fiber metric induced by ${\mathcal G}$ on $U^*.$ In this case, we have that
$F_h=\widehat{(1,X_0)} + H\circ \mu, $ where $\widehat{(1,X_0)}$ is the linear function on ${\mathcal U}^+$ induced by the section $(1,X_0)\in \Gamma(\widetilde{\mathcal U}).$

Let $(q^i)$ be a system of local coordinates for $Q$ and consider an
orthonormal local basis $\{e_a,e_A\}$ of $\Gamma(E)$ adapted to the decomposition $E=U\oplus U^\perp$. Then, $\{(1,X_0), (0,e_a)\}$ is a local basis of sections of $\widetilde{\mathcal U}$. Denote by $(q^i,p_0,p_a)$  (respectively, $(q^i,p_a)$) the corresponding local coordinates
on ${\mathcal U}^+$ (respectively, $U^*$) with respect to the dual basis of $\{(1,X_0), (0,e_a)\}$ (respectively $\{e_a\}$).

The local expression of the hamiltonian section  $h\in
\Gamma(\mu)$ is the following
\[
\begin{array}{rcl}
h(q^i,p_a)&=&(q^i,-H=\displaystyle\displaystyle-\frac{1}{2}(p_a)^2-
{\mathbb V}(q), p_a).
\end{array}
\]
The integral curves of the hamiltonian vector field $R_h\in
{\mathfrak X}(U^*)$ are the solutions of the Hamilton equations
\[
\begin{array}{rcl}
\displaystyle\frac{dq^i}{dt}&=& \rho_0^i + \rho_a^ip_a\\[8pt]
\displaystyle\frac{dp_b}{dt}&=&-\rho_b^i\displaystyle\frac{\partial
{\mathbb V} }{\partial q^i} + (\C_{0b}^c + \C_{ab}^cp_a)p_c,
\end{array}
\]
where ${\mathcal P}(\lcf X_0,e_b\rcf)={\mathcal C}_{0b}^ce_c,$ $\rho(X_0)=\rho_0^i\frac{\partial}{\partial q^i}$ and ${\mathcal C}_{ab}^c$ and $\rho^i_a$ are local structure functions of $E.$ A Lagrangian version of these equations was considered in \cite{IMMS}.

Now, let $\alpha$ be  a section  of $U^*.$ In this case, we have that the section of  ${\widetilde{\mathcal U}}$ defined as in (\ref{3.2}) is just
\[
\zeta_h^\alpha=(1,\zeta_H^\alpha +  X_0),
\]
where  $\zeta^\alpha_H$  is the section of $U$ given by
$$\eta(\zeta_H^\alpha)=\eta^\vee(H)\circ \alpha, \;\;\; \forall \eta\in \Gamma(U^*).$$

Therefore, from Theorem \ref{main}, we deduce that

\begin{corollary} \label{Cor:Affine}
For $\alpha \in \Gamma(U^*)$, the following statements are
equivalent:
\begin{enumerate}
\item[$(i)$] If $c:I\to Q$ is an integral curve of $R_{h}^\alpha=
T\tau_{U^*}\circ R_{h}\circ \alpha\in {\mathfrak
X}(Q)$,  then  $ \alpha\circ c:I\to U^*$ is a solution of the Hamilton
equations.
\item[$(ii)$] $\alpha\in \Gamma(U^*)$ satisfies the {
Hamilton-Jacobi equation}:
$$i_{\zeta_H^\alpha} d^U\alpha+\mu\circ i_{(1,X_0)}d^{\widetilde{\mathcal U}}(h\circ \alpha)=0.$$
\end{enumerate}
\end{corollary}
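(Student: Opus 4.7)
The plan is to deduce this corollary directly from Theorem \ref{main} applied to the skew-symmetric algebroid $\widetilde{\mathcal U}\to Q$ equipped with its distinguished $1$-cocycle $\phi$. The setup matches the framework of the theorem: the associated vector bundle $\widehat{\phi}^{-1}(0)$ is identified with $U$, the AV-bundle is $\mu:\mathcal U^+\to U^*$, and $h$ is the prescribed hamiltonian section. Thus Theorem \ref{main} tells us immediately that $(i)$ is equivalent to
\[
\mu\circ\bigl(i_{\zeta_h^\alpha} d^{\widetilde{\mathcal U}}(h\circ\alpha)\bigr)=0.
\]
The task is then to rewrite this single identity in the split form appearing in $(ii)$.

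The key step is to use the decomposition
\[
\zeta_h^\alpha=(1,X_0)+(0,\zeta_H^\alpha)=(1,X_0)+i_U(\zeta_H^\alpha),
\]
where $i_U\colon U\hookrightarrow\widetilde{\mathcal U}$ sends $\sigma\mapsto(0,\sigma)$, which is exactly the canonical inclusion of the kernel of $\widehat{\phi}$. Because interior product is additive in the vector argument, the Hamilton--Jacobi equation furnished by Theorem \ref{main} splits as
\[
\mu\circ i_{(1,X_0)}\, d^{\widetilde{\mathcal U}}(h\circ\alpha)\; +\;\mu\circ i_{i_U(\zeta_H^\alpha)}\, d^{\widetilde{\mathcal U}}(h\circ\alpha)\;=\;0.
\]
To finish, I need to identify the second summand with $i_{\zeta_H^\alpha}d^U\alpha$. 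For this I invoke the two compatibilities already established in Subsection~2.1: the fact that $\mu$ is dual to $i_U$, and the relation $d^U(\mu\circ\beta)=\wedge^2\mu\circ d^{\widetilde{\mathcal U}}\beta$ for $\beta\in\Gamma(\widetilde{\mathcal U}^*)$. Evaluating on an arbitrary $\sigma\in\Gamma(U)$, one computes
\[
\bigl(\mu\circ i_{i_U(\zeta_H^\alpha)}\, d^{\widetilde{\mathcal U}}(h\circ\alpha)\bigr)(\sigma)=d^{\widetilde{\mathcal U}}(h\circ\alpha)\bigl(i_U(\zeta_H^\alpha),i_U(\sigma)\bigr)=\bigl(\wedge^2\mu\,(d^{\widetilde{\mathcal U}}(h\circ\alpha))\bigr)(\zeta_H^\alpha,\sigma),
\]
and then the compatibility with $\mu\circ h\circ\alpha=\alpha$ yields $\wedge^2\mu\circ d^{\widetilde{\mathcal U}}(h\circ\alpha)=d^U\alpha$, giving the required identification.

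The main (mild) obstacle is purely notational: keeping track of the inclusion $U\hookrightarrow\widetilde{\mathcal U}$, the dual restriction $\mu$, and the behaviour of interior products under these maps so that the two-term splitting of the Hamilton--Jacobi equation is manifest. Once that bookkeeping is in place, the corollary is a direct specialisation of Theorem \ref{main}, and no further analytic work is required.
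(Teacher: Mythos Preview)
Your proposal is correct and follows exactly the approach the paper intends: the paper simply states that the corollary follows from Theorem~\ref{main}, having already recorded that $\zeta_h^\alpha=(1,\zeta_H^\alpha+X_0)$, and you have supplied the straightforward bookkeeping (the splitting via additivity of the interior product and the identification $\wedge^2\mu\circ d^{\widetilde{\mathcal U}}(h\circ\alpha)=d^U\alpha$ from $\mu\circ h\circ\alpha=\alpha$) that the paper leaves implicit.
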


\bigskip

\begin{remark}{\rm
The section $\omega^\alpha_h=\mu\circ
i_{(1,X_0)}d^{\widetilde{\mathcal U}}(h\circ \alpha)$ on $U^*$ can
be written as
\[
\omega_h^\alpha(X)=\rho(X_0)(\alpha(X))+\rho(X)(H\circ \alpha)-\alpha({\mathcal P}(\lcf X_0,X\rcf)).
\]
}
\end{remark}

From Corollary \ref{c2}. we conclude that
\begin{corollary} \label{Cor:Affine2}  Suppose that $Q$ is a connected manifold and that the finitely
 generated distribution ${\mathcal V}$ defined by ${\mathcal V}_q:=\rho_U(U_q)$ for all $q\in Q$,
is a completely nonholonomic distribution.  If  $\beta$ is a section
of ${\mathcal U}^+$ such that $d^{\widetilde{\mathcal U}}\beta=0$,
 then the  following statements are equivalent:
\begin{enumerate}
\item[$(i)$] If $c:I\to Q$ is an integral curve of $R_{h}^{\mu\circ \beta}=
T\tau_{U^*}\circ R_{h}\circ \mu\circ \beta\in {\mathfrak
X}(Q)$,  then  $ \mu\circ \beta\circ c:I\to V^*$ is a solution of
the Hamilton equations.
\item[$(ii)$] $\beta\in \Gamma({\mathcal U}^+)$ satisfies the {
Hamilton-Jacobi equation}:
\[
H\circ \mu\circ \beta + \beta(1,X_0)=constant.
\]

\end{enumerate}
\end{corollary}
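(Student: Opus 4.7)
The plan is to recognize Corollary \ref{Cor:Affine2} as a direct instantiation of Corollary \ref{c2} for the skew-symmetric algebroid with $1$-cocycle built in the preceding paragraphs of this subsection. So the first step is to match the data: take the ambient skew-symmetric algebroid to be $(\widetilde{\mathcal U},\lcf\cdot,\cdot\rcf_{\widetilde{\mathcal U}},\rho_{\widetilde{\mathcal U}})$, the $1$-cocycle to be $\phi\in\Gamma({\mathcal U}^+)$ (which we have already verified satisfies $d^{\widetilde{\mathcal U}}\phi=0$), the associated vector subbundle $\widehat{\phi}^{-1}(0)$ to be identified with $U$, and the hamiltonian section to be $h\colon U^*\to{\mathcal U}^+$. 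Under these identifications, the induced distribution ${\mathcal V}_q=\rho_V(V_q)$ appearing in Corollary \ref{c2} is exactly $\rho_U(U_q)$, so the completely nonholonomic hypothesis on ${\mathcal V}$ is precisely the hypothesis assumed here.

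Next, I need to verify that the Hamilton-Jacobi equation of Corollary \ref{c2}, namely $F_h\circ\beta=\text{constant}$, translates to the stated equation $H\circ\mu\circ\beta+\beta(1,X_0)=\text{constant}$. This is immediate from the explicit formula $F_h=\widehat{(1,X_0)}+H\circ\mu$ established when defining $h$: evaluating at $\beta$ gives
\[
F_h\circ\beta \;=\; \widehat{(1,X_0)}\circ\beta + H\circ\mu\circ\beta \;=\; \beta(1,X_0)+H\circ\mu\circ\beta,
\]
where $\widehat{(1,X_0)}\circ\beta=\beta(1,X_0)$ because the linear function on $\mathcal{U}^+$ associated with a section $(1,X_0)$ of $\widetilde{\mathcal U}$ is, by definition, its natural pairing with elements of ${\mathcal U}^+$.

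Finally, Corollary \ref{c2} gives the equivalence of (i) with $F_h\circ\beta=\text{constant}$ provided $\beta$ is a $1$-cocycle of $\widetilde{\mathcal U}^*={\mathcal U}^+$, which is precisely the hypothesis $d^{\widetilde{\mathcal U}}\beta=0$ imposed in the statement. Chaining these equivalences yields the desired result. There is essentially no technical obstacle here, since all the heavy lifting was done in Theorem \ref{main} and Corollary \ref{c2}; the only thing to be careful about is the notational bookkeeping between $\beta$ viewed as a section of ${\mathcal U}^+$ and the linear function $\widehat{(1,X_0)}$ on ${\mathcal U}^+$, which I would spell out in one short line as above.
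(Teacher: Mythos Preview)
Your proposal is correct and is exactly the approach the paper takes: the text immediately preceding the corollary reads ``From Corollary \ref{c2}. we conclude that'', so the corollary is stated as a direct specialization of Corollary \ref{c2} to the skew-symmetric algebroid $(\widetilde{\mathcal U},\lcf\cdot,\cdot\rcf_{\widetilde{\mathcal U}},\rho_{\widetilde{\mathcal U}},\phi,h)$, using the already established formula $F_h=\widehat{(1,X_0)}+H\circ\mu$. Your explicit unpacking of the identification $F_h\circ\beta=H\circ\mu\circ\beta+\beta(1,X_0)$ is exactly the bookkeeping the paper leaves implicit.
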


\begin{example}\label{ex:ball}{\emph{ An homogeneous rolling ball without sliding
on a rotating table with time-dependent angular velocity. }}
{\rm We consider a homogeneous ball with radius $r>0$, mass $m$ and
inertia $mk^2$ about any axis. Suppose that the ball rolls without
sliding on a horizontal table which rotates with a time-dependent
angular velocity $\Omega(t)$ about vertical axis through of one of
its points. Apart from the gravitational force, no other external
forces are assumed.

Choose a cartesian reference frame with origin at the center of
rotation of the table and $z-$axis along the rotation axis. If $(t,q^1,q^2,\dot{q}^1,\dot{q}^2,\omega_1,\omega_2,\omega_3)$ are the corresponding coordinates over $\R\times T\R^2\times \R^3,$ then $(q^1,q^2)$ denotes the position of the point of contact of the
sphere with the table and  $\omega_1, \omega_2$ and $\omega_3$ are the
components of the angular velocity of the sphere.

Note that since the ball is rolling without sliding, then the system
is subjected to the affine constraints
 \begin{eqnarray*}
\dot q^1 - r \omega_2 &=&
- \Omega(t) q^2 \\
\dot q^2 + r \omega_1 &=& \Omega(t) q^1.
\end{eqnarray*}

Let $(t,q^1,q^2,p_1,p_2,\pi_1,\pi_2,\pi_3)$ be the corresponding coordinates on $(\R\times T\R^2\times \R^3)^*.$ The hamiltonian section $h:(\R\times T\R^2\times \R^3)^*\to \R\times (\R\times T\R^2\times \R^3)^*$  of the system is given by
$$h(t,q^1,q^2,p_1,p_2,\pi_1,\pi_2,\pi_3)=(-H(t,q^1,q^2,p_1,p_2,\pi_1,\pi_2,\pi_3), t,q^1,q^2,p_1,p_2,\pi_1,\pi_2,\pi_3)$$
where $H:(\R\times T\R^2\times \R^3)^*\to \R$ is the real function
$$H=\frac{1}{2} (\frac{1}{m}(p_1^2 + p_2^2) + \frac{1}{mk^2}(\pi_1^2 + \pi_2^2 + \pi_3^2)).$$
Moreover, the constraints may be rewritten as follows
\begin{eqnarray*}
\psi_1&=&\Omega(t)q^2 + \frac{1}{m}p_1 -\frac{r}{mk^2}\pi_2=0 \\
\psi_2&=&-\Omega(t)q^1 + \frac{1}{m}p_2 +\frac{r}{mk^2}\pi_1=0.
\end{eqnarray*}

Then the Hamilton equations of this non-holonomic system are
$$\begin{array}{rcl}
\dot{q^1}&=&\displaystyle\frac{1}{m}p_1\\[8pt]
\dot{q^2}&=&\displaystyle\frac{1}{m}p_2\\[8pt]
\dot{p_1}&=& -\displaystyle\frac{mk^2}{k^2 + r^2} (\frac{d\Omega(t)}{dt}q^2 + \Omega(t) \frac{p_2}{m})\\[8pt]
\dot{p_2}&=& \displaystyle\frac{mk^2}{k^2 + r^2} (\frac{d\Omega(t)}{dt}q^1 + \Omega(t) \frac{p_1}{m})\\[8pt]
\dot{\pi_1}&=& \displaystyle\frac{rmk^2}{k^2 + r^2} (\frac{d\Omega(t)}{dt}q^1 + \Omega(t) \frac{p_1}{m})\\[8pt]
\dot{\pi_2}&=&\displaystyle\frac{rmk^2}{k^2 + r^2} (\frac{d\Omega(t)}{dt}q^2 + \Omega(t) \frac{p_2}{m})\\[8pt]
\dot{p_3}&=&0
\end{array}$$
and the constraints $\psi_1=\psi_2=0$ (for more details, see \cite{IMMS}; see also \cite{DianaT}).

Our goal is to encode all this information in a mechanical system subjected to affine nonholonomic constraints on a Lie algebroid. Consider the vector bundle $\tau_E:E\to
Q,$ where $E:= T\R ^3\times \R^3$, $Q=\R^3$ and $\tau_E:E\to Q$ is  defined by
\[
\tau_E(t,q^1,q^2,\dot{t},
\dot{q}^1,\dot{q}^2,\omega_1,\omega_2,\omega_3)=(t, q^1,q^2).
\]
We choose the
following global basis of $\Gamma(E)$
\[
\begin{array}{lll}e_0=(\displaystyle\frac{\partial }{\partial t}-\Omega(t)q^2\displaystyle\frac{\partial }{\partial q^1} + \Omega(t)q^1\displaystyle\frac{\partial }{\partial q^2},0)&e_1=(\displaystyle\frac{\partial }{\partial q^1},0),&
e_2=(\displaystyle\frac{\partial }{\partial q^2},0),\\[8pt]
e_3=(0,(1,0,0)),& e_4=(0,(0,1,0)),&e_5=(0,(0,0,1)),\end{array}\]
On $E$ we define the Lie algebroid structure by
\[
\lcf e_0,e_1\rcf=-\Omega(t)e_2,\;\;\; \lcf e_0,e_2\rcf=\Omega(t)e_1,\;\;\;
\lcf e_3,e_4\rcf_E=e_5,\;\;\; \lcf e_4,e_5\rcf_E=e_3,\;\;\;\lcf
e_5,e_3\rcf_E = e_4,\]
\[\rho_E(e_0)=\displaystyle\frac{\partial }{\partial t}-\Omega(t)q^2\displaystyle\frac{\partial }{\partial q^1} + \Omega(t)q^1\displaystyle\frac{\partial }{\partial q^2},\;\;\;\;\rho_E(e_1)=\displaystyle\frac{\partial }{\partial q^1},\;\;\;\; \rho_E(e_2)=\displaystyle\frac{\partial }{\partial q^2}.\]
The rest of the local structure functions are zero.

The constraints induce a vector subbundle of $E$
$$U:= span \{ e_3-re_2\, , \, e_4+re_1\, , \, e_5 \}.$$

Consider  the bundle metric on $E$
$${\mathcal G}={e_0}^2 + (m((e_1)^2 + (e_2)^2) + mk^2((e_3)^2 + (e_4)^2 + (e_5)^2).$$

In order to do the decomposition $E=U\oplus U^\perp,$ we take  the following orthonormal basis of $\Gamma(E)$ with
respect to ${\mathcal G}$
\[
\begin{array}{llll}
\bar{e_0}=e_0,&
\bar{e}_1=\displaystyle \frac{1}{k\sqrt{m(k^2+r^2)}} ({r}e_3
+ k^2e_2),&
\bar{e}_2=\displaystyle \frac{1}{k\sqrt{m(k^2+r^2)}}({r}e_4-k^2e_1),\\[8pt]
\bar{e}_3=\displaystyle\frac{1}{\sqrt{m(k^2+r^2)}}(e_3-re_2),&
\bar{e}_4=\displaystyle\frac{1}{\sqrt{m(k^2+r^2)}}(e_4 + re_1),&
\bar{e}_5=\displaystyle\frac{1}{k\sqrt{m}}e_5.\end{array}\]
Then, $\{\bar e_3, \bar e_4, \bar e_5 \}$ (respectively, $\{\bar{e_0}, \bar e_1,\bar e_2\}$) is a orthonormal basis of $U$ (respectively, $U^\perp$).

Moreover, for this mechanical system, the distinguished section $X_0$ of $E$ is
$X_0=\bar{e}_0.$
Note that ${\mathcal P}(X_0)=0.$

Denote by $(t,q^1,q^2,\bar p_0, \bar p_1,\bar p_2,\bar \pi_1,\bar \pi_2,\bar \pi_3)$ the coordinates on $E^*$ with respect to the dual basis $\{\bar e^0, \bar e^1, \bar e^2, $ $\bar e^3, \bar e^4, \bar e^5 \}$ of $\{\bar e_0, \bar e_1, \bar e_2, \bar e_3, \bar e_4, \bar e_5 \}$. With respect to these coordinates the function $H:U^*\to \R$ is defined by
$$H(\bar \pi_1,\bar \pi_2,\bar \pi_3)=\frac{1}{2}(\bar \pi_1^2 + \bar \pi_2^2 +\bar \pi_3^2)$$
and the structure functions of the skew-symmetric algebroid on
$\widetilde{\mathcal U}\to \R^3$ with respect to the basis
$\{(1,X_0),(0,\bar e_i)\}_{i=3,4,5}$ are the following
$$\begin{array}{c}
\bar{\C}_{34}^5=\bar{\C}_{45}^3=\bar{\C}_{53}^4=\displaystyle\frac{k}{\sqrt{m}(r^2+ k^2)},\\[8pt] \bar{\rho}_0^0=1,\;\;\;
\bar{\rho}_0^1=-\Omega(t)q^2, \;\;\;\bar{\rho}_0^2=\Omega(t)q^1,\;\;\;
\bar{\rho}_4^1=-\bar{\rho}_3^2=\displaystyle\frac{r}{\sqrt{m(r^2+
k^2)}}.
\end{array} $$

Let us consider the section $\alpha \in \Gamma(U^*)$ to be $\alpha=
d^Ug$ for the real function on $\R^3$
$$g=g(t,q^1,q^2)= \varphi_1 (t)
q^1 + \varphi_2(t)q^2$$
 where $\varphi_1, \varphi_2 \in
C^\infty(\R)$. Then, $$\alpha =
\frac{r}{\sqrt{m(k^2+r^2)}}(- \frac{\partial g}{\partial q^2} \bar e
^3 +  \frac{\partial g}{\partial q^1}
\bar e ^4)=
\displaystyle\frac{r}{\sqrt{m(k^2+r^2)}} (-\varphi_2(t) \bar e ^3 +
 \varphi_1(t) \bar e ^4) $$
and the section $\zeta_H^\alpha\in\Gamma(U)$ is
$$ \zeta_H^\alpha=
\displaystyle\frac{r}{\sqrt{m(k^2+r^2)}} (-\varphi_2(t) \bar e _3 +
 \varphi_1(t) \bar e _4).$$

It is important to note that $\alpha \in \Gamma(U^*)$ is not a
1-cocycle of the skew-symmetric algebroid $\tau_{U^*}:U^*
\rightarrow Q$. In fact, $$d^U\alpha = d^U(d^U g)= \frac{k r}{m(k^2+
r^2)^{3/2}}(\varphi_1(t) {\bar e}^3 + \varphi_2(t) {\bar e}^4)\wedge
{\bar e}^5 \not= 0.$$

However,
\[
i_{\zeta_H^\alpha}d^U\alpha=0.
\]
Thus,  the Hamilton-Jacobi equation becomes
\begin{equation}\label{HJ1}\dot \varphi_2(t)= \frac{\Omega(t) r^2}{k^2+r^2} \varphi_1(t) \qquad
\dot \varphi_1(t)= -\frac{\Omega(t) r^2}{k^2+r^2} \varphi_2(t).\end{equation}

Now, in order to apply Corollary \ref{Cor:Affine}, we have to find
an integral curve $c(s)=(t(s),q^1(s),q^2(s))$, for $s \in \R$ of the
vector field $R_h^\alpha \in \frak{X}(Q)$ given by
$$R_h^\alpha =
\frac{\partial}{\partial t}+ \left( \frac{r^2}{m(k^2+r^2)} \varphi_1
- \Omega(t) q^2 \right) \frac{\partial}{\partial q^1} + \left(
\frac{r^2}{m(k^2+r^2)} \varphi_2 + \Omega(t) q^1 \right)
\frac{\partial}{\partial q^2}.$$ Then, the curve $c$ has to verify
$t(s)=s+c_0$, and taking $c_0=0$ we get
\begin{equation} \begin{array}{rcl} \dot q^1(t) &=& \displaystyle \frac{r^2}{m(k^2+r^2)} \varphi_1(t) - \Omega(t)
q^2(t)\\[8pt]
\dot q^2(t) &=& \displaystyle \frac{r^2}{m(k^2+r^2)} \varphi_2(t)
+ \Omega(t) q^1(t). \label{ExAff:system} \end{array} \end{equation}
We conclude that $\displaystyle{ \alpha \circ c(t) = (t,
q^1(t),q^2(t) ; \frac{-r}{\sqrt{m(k^2+r^2)}} \varphi_2(t) ,
\frac{r}{\sqrt{m(k^2+r^2)}} \varphi_1(t),0)}$ is an integral curve
of $R_h$, where $\varphi_i(t)$ and $q^i(t)$ are real functions that satisfy (\ref{HJ1}) and (\ref{ExAff:system}).

As a particular case, we can take the angular velocity of the table
to be $\Omega(t)=\Omega_0=cte >0$ and we get that the curve $\alpha
\circ c(t) = (t, q^1(t),q^2(t) ; \alpha_3(c(t)),\alpha_4(c(t)),0)$
is given by
\begin{eqnarray*} \alpha_3(c(t)) &=& \frac{-r}{\sqrt{m(k^2+r^2)}} \left( C_1 \sin\left(\frac{r^2 \Omega_0
t}{k^2+r^2}\right)+
C_2 \cos\left(\frac{r^2 \Omega_0 t}{k^2+r^2}\right)\right),\\
\alpha_4(c(t))&=& \frac{r}{\sqrt{m(k^2+r^2)}} \left(C_1
\cos\left(\frac{r^2 \Omega_0 t}{k^2+r^2}\right) - C_2
\sin\left(\frac{r^2 \Omega_0 t}{k^2+r^2}\right)\right),
\end{eqnarray*} and $q^1(t), q^2(t)$ solutions of the system
(\ref{ExAff:system}). The trajectories of the ball on the rotating
table (trajectories in $(q_1(t),q_2(t))$) are ellipses centered in
the origin of the table, which depend on the initial conditions of
the problem.

If $\Omega(t)= \Omega_0 t$ then the curve $\alpha \circ c(t) = (t,
q^1(t),q^2(t) ; \alpha_3(c(t)),\alpha_4(c(t)),0)$ is given by
\begin{eqnarray*}
\alpha_3(c(t)) &=&  \frac{-r}{\sqrt{m(k^2+r^2)}} \left( C_1
\sin\left(\frac{r^2 \Omega_0 t^2}{2(k^2+r^2)}\right)
 + C_2 \cos\left(\frac{r^2 \Omega_0 t^2}{2(k^2+r^2)}\right) \right) ,\\ \alpha_4(c(t))
 &=& \frac{r}{\sqrt{m(k^2+r^2)}} \left(C_1 \cos\left(\frac{r^2 \Omega_0 t^2}{2(k^2+r^2)}\right)
  - C_2 \sin(\frac{r^2 \Omega_0 t^2}{2(k^2+r^2)}) \right), \end{eqnarray*}
where $C_1,C_2$ are real constants.  In this case, the solutions
$(q_1(t),q_2(t))$ of
  (\ref{ExAff:system}) give trajectories on the table as in Figure 2.
\begin{figure}[h]
\includegraphics[width=14cm]{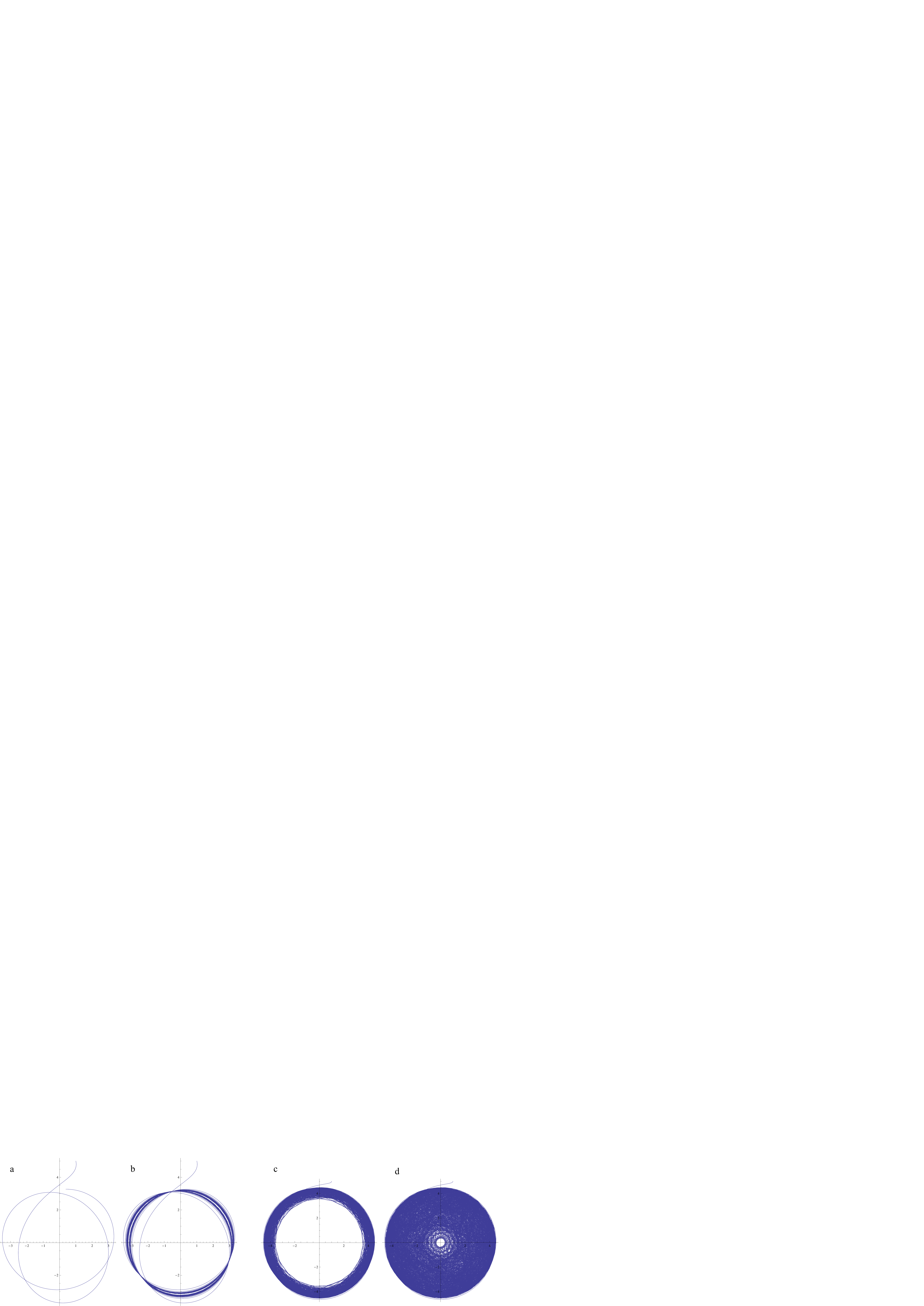}
 \caption{The trajectory of the ball on the plane with velocity $\Omega(t)=t$ for $0\leq t \leq 5$ and $0 \leq t \leq 30.$
 In the last two figures the velocity is changed
 $\Omega(t)=10 t$ for $0 \leq t\leq 20$ and $0 \leq t\leq 30$}
\end{figure}
}
\end{example}


\subsection{ A example of a nonholonomic mechanical system with linear external forces:
 the vertical rolling disk with external forces}\label{ex:disc}
 We will use the classical example of the vertical rolling disk
to show how external forces can be encoded in the geometric
structure of the constraint submanifold. Then, we are going to find
the Hamilton-Jacobi equation and we obtain some particular solutions.

Consider a vertical disk that is allowed to roll on the $xy$-plane
and to rotate about its vertical axis. Let $x,y$ denote the position
of contact of the disk with the $xy$-plane, $\theta$ will denote the
rotation angle of a chosen point $P$ of the disk with respect to the
vertical axis and finally $\phi$ will represent the orientation
angle of the disk as in Figure 3.
\begin{figure}[h]
 \includegraphics[height=4cm, width=9cm]{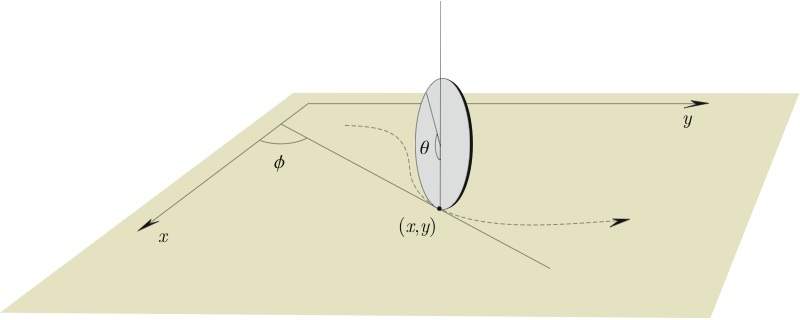}\\
 \caption{The vertical rolling disk}
\end{figure}

Therefore, the configuration space for the rolling disk is $Q=\R^2
\times S^1\times S^1$ with coordinates $(x,y,\theta,\phi)$. On the
tangent bundle $TQ \rightarrow Q$ we consider the Lie algebroid
structure $([\cdot , \cdot ]_{TQ}, id_{TQ})$ where $[\cdot , \cdot
]_{TQ}$ is the usual Lie bracket of vector fields and the anchor
map, in this case, is $id_{TQ}$.

The Lagrangian for this system is:
\[
L(x,y,\theta,\phi; \dot x, \dot y, \dot \theta, \dot
\phi)=\frac{1}{2} \left( m(\dot{x}^2+\dot{y}^2) + I \dot \theta^2 +
J \dot \phi^2 \right)
\]
where $m$ is the mass of the disk, $I$ its  moment of inertia about
the axis perpendicular to the plane containing the disk  and $J$ is
the moment of inertia about an axis in the plane of the disk. This
Lagrangian induces a fiber metric
$$\Gc= m (dx \otimes dx + dy \otimes dy) + I d\theta \otimes d\theta + J d\phi \otimes d\phi.$$

The nonholonomic constraints of rolling without slipping are
$$\left\{ \begin{array}{l} \dot x = (R \cos \phi) \dot \theta\\
\dot y = (R\sin \phi )\dot \theta \end{array}\right.$$ and they
define the constraint subbundle $\tau_D : D \rightarrow Q$ of $TQ$.

In terms of the fiber metric $\Gc$, we find an adapted basis for the
nonholonomic problem. More precisely, we look for an orthonormal
basis of vector fields $\{X_1,X_2, X_3,X_4\}$ of $TQ$
such that $D = span \{X_1,X_2\}$ and $D^{\perp} = span\{X_3,X_4\}$.
This basis is given by
\begin{eqnarray*} X_1 &=& \frac{1}{\sqrt{R^2 m + I}}
\left( R \cos \phi \frac{\partial}{\partial x} + R \sin \phi \frac{\partial}{\partial y} +
\frac{\partial}{\partial \theta} \right) \\
X_2&=& \frac{1}{\sqrt{J}} \frac{\partial}{\partial \phi} \\
X_3&=& \frac{1}{\sqrt{m}} \left( \sin \phi \frac{\partial}{\partial x} - \cos \phi \frac{\partial}{\partial y} \right)\\
X_4 &=& \sqrt{\frac{I}{m(R^2 m + I)}} \left( \cos \phi
\frac{\partial}{\partial x} + \sin \phi \frac{\partial}{\partial y}
- \frac{Rm}{I} \frac{\partial}{\partial \theta} \right).
\end{eqnarray*}

We endow the fiber bundle $\tau_D: D \rightarrow Q$ with a
skew-symmetric algebroid structure $(\lcf \cdot, \cdot \rcf _D,
\rho_D)$ defined by (see Example \ref{projector})
$$\lcf X_1, X_2 \rcf _D = P([X_1, X_2 ]_{TQ}) \quad \mbox{and} \quad \rho_D(X_1) = X_1, \ \rho_D(X_2) = X_2,$$
where $P:TQ \rightarrow D$ is the orthogonal projector (with respect to the decomposition
$TQ=D \oplus D^{\perp}$). Note that, $\rho_D=\rho_{TQ} \circ i_D$ with $i_D:D \hookrightarrow TQ$
the natural inclusion. Therefore, in terms of the basis $\{X_1, X_2\}$, the (non zero) local structure
functions of the skew-symmetric algebroid on $D$ are given by
\begin{eqnarray} (\rho_D)_1^x = \frac{R \cos \phi}{\sqrt{m R^2 + I}}, & \ & \ \  (\rho_D)_1^\theta =
 \frac{1}{\sqrt{m R^2 + I}}, \nonumber \\
 (\rho_D)_1^y = \frac{R \sin \phi}{\sqrt{m R^2 + I}}, & \ & \ \ (\rho_D)_2^\phi = \frac{1}{\sqrt{J}}.
 \label{Ex:D StrucFunc} \end{eqnarray}
Since $[X_1,X_2]_{TQ} \in span \{X_3\}$ we have that $\C_{12}^1=\C_{12}^2=0$.

In coordinates \ $(v^1, v^2,v^3,v^4)$ \ induced  by the orthonormal basis of sections
  \ $\{X_1,X_2,X_3,X_4\}$ \ the Lagrangian is $$L(x,y,\theta,\phi; v^1, v^2,v^3,v^4) =
  \frac{1}{2} \left( (v^1)^2+ (v^2)^2 + (v^3)^2+(v^4)^2\right),$$ and the equations determining the constraints
   are $v^3 = v^4=0$. Therefore, the restricted lagrangian $L_D:D \rightarrow \R$ becomes
    $L_D(x,y,\theta,\phi; v^1, v^2) = \frac{1}{2} \left((v^1)^2+ (v^2)^2 \right)$.

Consider now, the dual vector bundle \ $\tau_{D^*}: D^* \rightarrow
Q$ \ \  with coordinates \ $(x,y,\theta,\phi; p_1, p_2)$ induced by
the dual basis $\{X^1, X^2\}$ of $\{X_1, X_2\}$. Then, the vector
bundle $\tau_{D^*}: D^* \rightarrow Q$ has a linear almost Poisson
structure given by
\begin{eqnarray*} \{x,p_1\}_{D^*} = \frac{R \cos \phi}{\sqrt{m R^2 + I}} ,  & \ & \ \ \{\theta,p_1\}_{D^*} =
 \frac{1}{\sqrt{m R^2 + I}}, \\
\{y,p_1\}_{D^*} = \frac{R \sin \phi}{\sqrt{m R^2 + I}} , & \ & \ \
\{\phi,p_2\}_{D^*} = \frac{1}{\sqrt{J}}
\end{eqnarray*} and the other fundamental  brackets are zero.

In these coordinates, the Hamiltonian function $H:D^*\rightarrow \R$
can be written as
$$H(x,y,\theta,\phi; p_1, p_2)= \frac{1}{2} \left((p_1)^2+(p_1)^2\right).$$

It is very interesting the study of the rolling disk with external
forces \cite{lewis}. The system has two natural inputs, a torque
that makes the disk spin and another one that makes the disk roll.
First we are going to study the most general situation  and then we
will analyze particular cases. Suppose that a linear force is acting
on the disk, then the pull back of this force in $D^*$ is given by
$\tilde F(q,v)= (\tilde F_1^1(q) v^1 + \tilde F_2^1(q) v^2)X^1(q) +
(\tilde F^2_1(q) v^1 + \tilde F_2^2(q) v^2)X^2(q),$ where
$(q,v)=(x,y,\theta,\phi; v^1, v^2)$  and $F_i^j \in C^\infty(Q)$.

Since the chosen basis is orthonormal, we have that the homomorphism
$F:D \rightarrow D$ induced by the force $\tilde F$ is
\begin{eqnarray*} F(X_1) & =& \tilde F_1^1 X_1 + \tilde F_1^2 X_2 \\
F(X_2) &=& \tilde F_2^1  X_1 + \tilde F_2^2 X_2 \end{eqnarray*} and
thus the skew-symmetric algebroid on $\R \times D$ has (non zero)
local structure functions given by $\C_{01}^1= -\tilde F_1^1,
\C_{01}^2= -\tilde F_1^2, \C_{02}^1 = -\tilde F_2^1, \C_{02}^2 =
-\tilde F_2^2$ and equation (\ref{Ex:D StrucFunc}).

Therefore, the corresponding Hamilton equations modified by the
action of an external force are
\begin{eqnarray*} \dot x = \frac{R \cos\phi}{\sqrt{I + m R^2}} p_1,
 \ & \ & \ \ \dot y = \frac{R\sin\phi }{\sqrt{I + m R^2}} p_1,
  \\ \dot \theta =\frac{1}{\sqrt{I + m R^2}} p_1,
   \ & \ & \ \ \dot \phi = \frac{1}{\sqrt{J}}p_2,
    \\ \dot p_1 = - \tilde F_1^1 p_1 - \tilde F_1^2 p_2
     \ & \ & \ \ \dot p_2 = - \tilde F_2^1 p_1 - \tilde F_2^2 p_2.
\end{eqnarray*}

In order to write the Hamilton-Jacobi equations, let us consider a
section $\alpha \in \Gamma(D^*)$.

Then, such equations are
\begin{equation} \begin{array}{rcl} \alpha_1.X_1(\alpha_1) + \alpha_2.X_1(\alpha_2) + \alpha_2.X_2(\alpha_1)
-\alpha_2.X_1(\alpha_2) + \alpha_1.\tilde F_1^1 + \alpha_2.\tilde F_1^2 &=&0 \\
\alpha_1 . X_2(\alpha_1) + \alpha_2. X_2(\alpha_2) - \alpha_1 .
X_2(\alpha_1) + \alpha_1. X_1(\alpha_2) + \alpha_1. \tilde F_2^1 +
\alpha_2.\tilde F_2^2 &=&0
\end{array}  \label{Ex:D HJeq} \end{equation}
where $\alpha = \alpha_1 X^1 + \alpha_2 X^2$ and $\alpha_1 ,
\alpha_2 \in C^\infty(Q)$.

 {\bf Particular case: A torque that makes the disk spin.}

Let us consider the external force $\tilde F = \lambda(\phi) \dot
\phi\, d\phi,$ with $\lambda \in C^{\infty}(\R)$. Writing this force
in terms of the dual basis $\{X^1,X^2\}$ we obtain $$\tilde F(q,v) =
\frac{\lambda(\phi)}{{J}} v^2  X^2,$$ where $(q,v)=(x,y,\theta,\phi,
v^1,v^2).$ Therefore, the homomorphism $F:D \rightarrow D$ is
$$F(X_1) = 0 \quad \mbox{and} \quad F(X_2) =
\frac{\lambda(\phi)}{{J}} X_2$$ and the skew-symmetric algebroid on
$\R \times D$ has (non zero) local structure functions given by
$\C_{02}^2 =- \frac{\lambda(\phi)}{J}$ and (\ref{Ex:D StrucFunc}).

Consider a section $\alpha \in
\Gamma(D^*)$ such that $\alpha = k X^1 +
\alpha_2(\phi) X^2$, with $k = constant.$

Thus,
Hamilton-Jacobi equation (\ref{Ex:D HJeq}) is simply (note that, in
this case, $d^E \alpha=0$),
\begin{equation} \alpha_2'(\phi) = - \frac{\lambda(\phi)}{\sqrt{J}}. \label{Ex:D HJeq2} \end{equation}
Therefore, from (\ref{Ex:D HJeq2}), we deduce that
$$\alpha_2(\phi)= - \frac{1}{\sqrt{J}}\int_0^\phi \lambda(s)ds + \kappa $$
where $\kappa$ is an arbitrary constant.

By Eq. (\ref{Rha}), we have $$R_{h}^\alpha = \frac{R\, k
\cos\phi}{\sqrt{I + m R^2}} \frac{\partial}{\partial x} + \frac{R\,
k \sin\phi}{\sqrt{I + m R^2}} \frac{\partial}{\partial y} + \frac{
k}{\sqrt{I + m R^2}} \frac{\partial}{\partial \theta} -
\frac{1}{{J}} \left(\int_0^\phi \lambda(s) ds - \kappa \right)
\frac{\partial}{\partial \phi}.$$ We conclude, by Corollary
\ref{c3}, that $$\alpha \circ c(t) = (x(t),y(t), \theta(t),\phi(t);
k , - \frac{1}{\sqrt{J}}\int_0^{\phi(t)} \lambda(s) ds -\kappa)$$
is an integral curve of $R_{h}\in {\mathfrak X}(D^*)$, if
$c(t)=(x(t),y(t), \theta(t),\phi(t))$ is an integral curve of
$R_{h}^\alpha$.

As a particular case, we fix $\lambda(\phi)=K \cos\phi$ with $K=cte
\neq 0$. Hence by equation (\ref{Ex:D HJeq2}) we have that
$\alpha_2(\phi) = -\frac{K}{\sqrt{J}} \sin \phi + \kappa$ but, just
for simplicity, we will choose $\kappa =0$.  If $c:I\to Q$,
$c(t)=(x(t),y(t),z(t),\theta(t),\phi(t))$,  is an integral curve of
$R_{h}^\alpha$ then $\dot \phi(t) = -\displaystyle\frac{K}{{J}} \sin
\phi$. That is, $$\phi(t) = 2 \arctan\left(e^{-\frac{K}{J}t
+\phi_0}\right)$$ with $\phi_0$ an arbitrary constant. Therefore,
the solution of the system, modified by an external force $\tilde F
=(K\cos\phi) \dot \phi \, d\phi$ that makes the disk spin, is
$$\alpha \circ c(t) = (x(t),y(t), \theta(t),\phi(t); k ,
-\frac{K}{\sqrt{J}} \sin \phi(t) ),$$  where
$x(t),y(t),\theta(t),\phi(t)$ are curves given by \begin{eqnarray*}
x (t) &=& \frac{R\, k }{\sqrt{I + m R^2}} \left(t + \frac{J}{K} \ln
\left( 1+ e^{-2\frac{K}{J}t +
2\phi_0} \right) \right) + x_0 \\
y (t) &=& \frac{J}{K}\frac{R\, k }{\sqrt{I + m R^2}}
\phi(t) + y_0\\
\theta(t) &=&  \frac{ k t}{\sqrt{I + m R^2}} + \theta_0 \\
\phi(t) &=& 2 \arctan\left(e^{-\frac{K}{J}t +\phi_0} \right)
\end{eqnarray*}
where $x_0,y_0, \theta_0, \phi_0$ are arbitrary constants.

We also have the dissipative term for this case given by $$ \{H\circ
\mu,F_h\}=-\frac{K \cos\phi}{J} (p_2)^2.$$

\begin{remark} The function $f \in C^\infty(Q)$, given by
$$f(\phi)=\frac{-1}{2J} \left(\int_0^\phi\lambda(s)ds \right)^2= -\frac{K^2}{2J} \sin^2 \phi$$
verifies that $F^*\alpha=d^D f$. Thus,  we obtain that the Hamilton
Jacobi equation can be written as $$H\circ \alpha -\frac{K^2}{2J}
\sin^2 \phi = constant,$$ on $Q$, since $D$ is a completely
nonholonomic distribution.
\end{remark}

\section{Conclusions and future work}

A Hamilton-Jacobi equation for a great variety of mechanical systems
is derived. The type of systems considered includes mechanical
systems with dissipative forces, nonholonomic system subjected to
linear or affine constraints or, even, explicitly time-dependent
mechanical systems. With this general purpose in mind, we find that
the geometric structure of skew-symmetric algebroid has the
appropriate inclusive nature, adequate to model all these different
types of mechanical systems. Adopting this point of view we prove a
general version of the Hamilton-Jacobi equation for skew-symmetric
algebroids with a distinguished cocycle, specializing the results
for the different mechanical systems under study. Several examples
prove the utility and novelty of our results.

Of course, a lot of work must be done in future research. For
instance, in our paper a crucial assumption is made: all the
constraints are linear or affine, even the dissipative forces
considered are of a very special type (in such a way that they
induce a linear bivector  on the dual bundle). It would be
interesting to discuss the more general case in a non-linear
setting, discovering the underlying geometric structures and
deriving, if possible, a Hamilton-Jacobi equation. Moreover, in
future papers, we will study more explicit examples of applications
of our theoretical setting, analyzing  when the separation of
variables technique works and relating it with topics like
integrability. Also, our setting is ready for the introduction of
control forces and therefore for the study of controlled mechanical
systems  and, as a consequence, to address problems like kinematic
reduction, kinematic controllability, Hamilton-Jacobi-Bellman
equation in optimal control, etc.


\end{document}